\newcolumntype{Y}{>{\centering\arraybackslash}X}
\theoremstyle{plain}
\newtheorem{proposition}{Proposition}[section]
\newtheorem{heuristic}{Heuristic}[section]
\newtheorem{lemma}{Lemma}[section]
\newtheorem{theorem}{Theorem}[section]
\newtheorem{corollary}[theorem]{Corollary}
\newtheorem{heuristicproposition}[proposition]{Heuristic Proposition}
\theoremstyle{remark}
\newtheorem*{remark}{Remark}
\theoremstyle{definition}
\newtheorem{definition}{Definition}[section]
\newcommand{\tmaffiliation}[1]{\\ #1}
\newcommand{\tmem}[1]{{\em #1\/}}
\newcommand{\tmemail}[1]{\\ \textit{Email:} \texttt{#1}}
\newcommand{\tmop}[1]{\ensuremath{\operatorname{#1}}}
\newcommand{\tmstrong}[1]{\textbf{#1}}
\newcommand{\tmtextbf}[1]{{\bfseries{#1}}}
\newcommand{\tmtextit}[1]{{\itshape{#1}}}
\newcommand{\tmtextmd}[1]{{\mdseries{#1}}}
\newenvironment{tmparmod}[3]{
  \begin{list}{}{\setlength{\topsep}{0pt}
  \setlength{\leftmargin}{#1}
  \setlength{\rightmargin}{#2}
  \setlength{\parindent}{#3}
  \setlength{\listparindent}{\parindent}
  \setlength{\itemindent}{\parindent}
  \setlength{\parsep}{\parskip}}
  \item[]
}{\end{list}}
\begin{document}

\title{Symbolic Constraints in Polyhedral Enclosure and Tetrahedral Decomposition in Genus-0 Polyhedra}

\author{
  Moustapha Itani
  \tmaffiliation{ORCID: 0000-0001-9372-3715}
  \tmemail{moustapha.itani@helsinki.fi}
}

\date{December 29, 2025}

\maketitle

\begin{abstract}
I present a coordinate-free, symbolic framework for determining whether a given
set of polygonal faces can form a closed, genus-zero polyhedral surface and for
predicting admissible internal tetrahedral decompositions consistent with
incidence constraints. The method uses only discrete combinatorial variables,
such as the number of tetrahedra $T$, internal gluing triangles $N_i$, and
internal triangulation segments $S_i$, and applies feasibility checks prior to
any geometric embedding.

For polyhedra in \emph{normal form}, I record exact incidence identities linking
$V$, $E$, and $F$ to a flatness parameter
$S := \sum_f (\deg f - 3)$, and identify parity-sensitive extremal behavior in
$E$, $F$, and $S$ arising from minimal vertex-degree constraints. These external
identities and parity-dependent bounds hold for genus-zero polyhedral graphs
under standard simplicity and connectivity assumptions. For internal quantities,
I prove exact relations $N_i = 2T - V + 2$ and $T - N_i + S_i = 1$, and derive
restricted linear ranges for tetrahedral decompositions in normal form with no
interior vertices. Together, these results yield a symbolic workflow for rapid
pre-screening of combinatorially impossible configurations, reducing reliance
on costly geometric validation in computational geometry, graphics, and
automated modeling. An open-source R implementation is available at
\url{https://github.com/MoustaphaItani/polyenclose/}.
\end{abstract}

\noindent\textbf{Tagging of Results:}
\begin{itemize}
  \item \textbf{Proposition} / \textbf{Lemma}: Proved in this paper or from established literature in genus-zero polyhedra in normal form.
  \item \textbf{Conjectural Proposition}: Believed true, supported by examples, but no general proof.
  \item \textbf{Heuristic Proposition}: Empirical or pattern-based rule, not formally proved.
  \item \textbf{Observation}: Fact specific to worked examples, not claimed to hold generally.
\end{itemize}

\vspace{1em}

\noindent\textbf{Author's Note:}  
This document represents original, exploratory research conducted independently by the author. It does not reflect the views or positions of any affiliated institution.  
The author does not claim to be a professional mathematician and welcomes constructive feedback, critical review, and potential collaboration to refine or expand upon the ideas presented.

\section{Introduction and Background}

This paper introduces a purely combinatorial framework for assessing the
feasibility of polyhedral enclosures in three-dimensional space, based solely
on symbolic logic and integer arithmetic. Unlike traditional geometric methods
that rely on coordinates, angles, or embeddings, this approach evaluates the
structural validity of polygonal configurations through discrete operations
such as triangulation segment counts, vertex valency rules, and symbolic angle
units.

\textbf{Scope and fixed assumptions:} Throughout this work, I limit attention to 
\emph{genus-zero (simply connected) surfaces}, with no geometric embeddability tests 
(i.e., no attempt to prove realizability in $\mathbb{R}^3$), 
and without consideration of higher-genus or non-manifold topologies.
Unless stated otherwise, all internal ``range'' results are given under the strict internal decomposition procedure described in the Methods section, 
i.e.~normal form with at most one new interior segment per layer, introduced only when necessary.
All identities and bounds presented are \emph{symbolic rather than metric}, 
and some are conjectural or supported by empirical validation rather than proved in full generality.

This work is exploratory and intended as a conceptual and symbolic approach to
studying polyhedral closure. The methodology emerged from empirical analysis of
polyhedral forms and was later formalized with computational assistance. It
seeks to answer a longstanding question: when can a set of polygons form a
valid 3D enclosure, and how can such enclosures be systematically analyzed or
decomposed without geometric tools?

The approach is related in spirit to Ziegler’s symbolic treatment of polytopes~\cite{ziegler2000}
and to combinatorial efforts in tetrahedral decomposition (e.g.,~\cite{eppstein2004}),
but differs in its exclusive reliance on integer logic and symbolic topological consistency.
A classical result by Steinitz~\cite{steinitz1922} shows that a graph is the 1-skeleton of a convex 3-polyhedron
if and only if it is planar and 3-connected, providing the foundational combinatorial criterion for convex polyhedral graphs.
While many known polyhedral validation tools rely on coordinate-based embeddings or
optimization algorithms, the framework presented here derives topological invariants and
structural bounds without any reference to spatial realization.

Recent extensions of Steinitz’s theorem have been explored for ball polyhedra~\cite{almohammad2020ballpolyhedra},
and Gallier and Quaintance~\cite{gallier2022aspects} provide a contemporary combinatorial-topological survey
covering shellings, Euler--Poincaré relationships, and triangulations.
From a computational perspective, Below, De Loera, and Richter-Gebert~\cite{below2004complexity}
proved that finding a triangulation with the minimum number of tetrahedra is NP-complete.
In contrast, the present method uses empirical observations to guide symbolic enumeration
and narrow down the space of combinatorial configurations that may support enclosure and decomposition.

\medskip
\noindent
\medskip
\noindent\textbf{New contributions.}
For genus-zero polyhedral graphs in normal form, I record exact external identities
linking $(V,E,F)$ to the face-degree excess (flatness)
\[
S:=\sum_f(\deg f-3),
\]
namely
\[
E = 3V - 6 - S,\qquad F = 2V - 4 - S,
\]
proved in Proposition~\ref{prop:ext-formulas}. Combining these with the
minimum degree condition $\deg(v)\ge 3$ yields parity-sensitive extremal bounds
on $E$, $F$, and $S$ (Lemma~\ref{lem:emin-trivalent} and Corollary~\ref{cor:S_upper_bound-F_lower_bound}).

\medskip
\noindent The resulting extremal ranges are summarized in Corollary~\ref{cor:S_upper_bound-F_lower_bound}.

\begin{remark}
For even $V$, equality in Lemma~\ref{lem:emin-trivalent} is achieved by cubic
(polyhedral) graphs such as prisms, the cube, and the dodecahedron.
For odd $V$, equality requires exactly one $4$--valent vertex and all others
trivalent; explicit realizations exist for many $V$, though no classification
is claimed here.
\end{remark}

Building on this, I propose an \emph{extended Euler-type heuristic}:
\[
V - E + F = 2(T - N_i + S_i),
\]
linking external incidence data to internal decomposition variables:  
$T$ = internal tetrahedra,  
$N_i$ = internal gluing triangles,  
$S_i$ = internal triangulation segments.

This identity reduces to the classical Euler formula when $T - N_i + S_i = 1$ (as in minimal or marching tetrahedralizations). Although not yet proved as a general necessity theorem, it functions as a strong symbolic consistency check.

Finally, empirical enumeration suggests a sharp rigidity phenomenon at the top of the internal ladder: \emph{bipyramidal} (bipyramidal--prism) polyhedra appear to be the only normal--form surfaces attaining $T_{\max}$, and their internal invariants $(T,N_i,S_i)$ are essentially \emph{$S$--inflexible}, admitting no variation in the external parameters $(E,F,S)$ that is compatible with a valid tetrahedral decomposition. More generally, determining how the realized set of admissible triples $(T,N_i,S_i)$ \emph{contracts} as the surface flatness $S$ increases remains open, and may admit either an empirical resolution via systematic enumeration or a theoretical one.

This document is structured as both a \textbf{conceptual foundation} and a
\textbf{practical toolbox} for symbolic polyhedral reasoning.
The core objective is to determine when a collection of polygonal faces, without any
geometric embedding, can enclose a valid 3D volume, and if so, what internal
structure it might support.

To ensure clarity, {\tmstrong{Table 1 summarizes the key variables and
symbolic terms}} used throughout this document. These variables describe
either the external surface of a polyhedron or structures required for its
internal decomposition into tetrahedra.

\begin{table}[H]
  \centering
  \begin{tabular}{lp{12cm}}
    \hline
    \textbf{Symbol} & \textbf{Meaning} \\
    \hline
    \multicolumn{2}{l}{\textbf{\textit{Surface / External Structure Variables}}} \\
    $V$ & Number of vertices \\
    $E$ & Number of edges \\
    $F$ & Number of faces \\
    $S$ & External non-intersecting triangulation segments (used to triangulate surface); also used throughout as a discrete flatness measure \\
    $N$ & Total internal angle units on the external surface (1 unit $=\pi$ radians $=180^\circ$) \\
    $F_\triangle$ & Number of boundary triangular faces resulting from the triangulation of the entire polyhedral surface \\
    $M$ & External angle fullness units (derived from polygon types) \\
    $F_3, F_4, F_5,\ldots$ & Face types: triangle (3 sides), quadrilateral (4), pentagon (5), etc. \\
    $V_3, V_4, V_5,\ldots$ & Vertex types: valency 3 (3 incident edges), valency 4, valency 5, etc. \\
    \hline
    \multicolumn{2}{l}{\textbf{\textit{Internal Decomposition Variables}}} \\
    $T$ & Number of internal tetrahedra (used to fill the volume) \\
    $N_i$ & Number of internal gluing angle units (each unit $=$ one triangle) \\
    $S_i$ & Number of interior edges (internal non-intersecting triangulation segments). In normal form there are no interior vertices, hence $S_i\equiv E_i$ \\
    $E_i$ & (Alias) Number of interior edges; in normal form $E_i=S_i$ \\
    $V_i$ & Internal vertices \\
    \hline
  \end{tabular}
  \caption{Key variables used in the symbolic framework. 
  \emph{Surface/external} variables describe the boundary structure, while \emph{internal} variables capture the combinatorial scaffolding of tetrahedral decompositions. 
  This grouping highlights the two structural layers that the framework relates through symbolic identities.}
\end{table}

The variables above are not derived from physical measurements or coordinates,
but from symbolic relationships among discrete structural components. For
example, angle units are treated symbolically as multiples of $\pi$, and
surface structures are quantified via segment and gluing unit counts rather
than metric angles or lengths.

The auxiliary variables $N$, $M$, $S$, $N_i$, and $S_i$ do not appear in the classical Euler–Steinitz framework.
They are introduced here to support symbolic decomposition bookkeeping and to connect surface composition to volumetric tetrahedralization. 
N represents the total internal angle units on the external surface, with one unit defined as $=\pi$ radians ($=180^\circ$ ).
$M$ is derived from polygon types and measures “external angle fullness” in the same units. $S$ counts the non-intersecting triangulation
segments added to polygonal faces to produce a fully triangulated surface and also serves as a discrete flatness measure.
$N_i$ counts the number of internal gluing angle units, each corresponding to a shared triangular face between two tetrahedra. 
$S_i$ counts the internal non-intersecting triangulation segments required to complete a volumetric mesh during decomposition.
None of these auxiliary quantities are edges in the polyhedron’s graph; they exist only in the symbolic model and may not correspond to realizable geometric features.

\subsection{Illustrative Examples of $N_i$, $S$, and $S_i$}

\paragraph{\tmtextbf{Rectangular Pyramid ($S = 1$):}} A five-vertex polyhedron
with a quadrilateral base and one apex point. To triangulate the external
surface, a single diagonal segment divides the base into two triangles,
yielding $S = 1$.

\begin{figure}[H]
\centering
\tdplotsetmaincoords{70}{120}
\begin{tikzpicture}[tdplot_main_coords,line cap=butt,line join=round,
    c/.style={circle,fill,inner sep=1pt},
    declare function={a=4;h=3;}]
    \path
    (0,0,0) coordinate (A)
    (a,0,0) coordinate (B)
    (a,a,0) coordinate (C)
    (0,a,0) coordinate (D)     
    (a/2,a/2,h)  coordinate (S);
    \draw[orange, thick] (B) -- (D);
    \draw (S) -- (D) -- (C) -- (B) -- cycle (S) -- (C);
    \draw[dashed] (S) -- (A) -- (D) (A) -- (B);
\end{tikzpicture}
\caption{Rectangular pyramid illustrating the flatness measure \(S\).
The single internal diagonal on the quadrilateral base (orange) is the unique surface triangulation segment needed to obtain a fully triangulated boundary, hence \(S=1\).
This corresponds to one coplanarity merge relative to the maximally triangulated case and, by Proposition~\ref{prop:triangulated-extrema}, decreases both \(F\) and \(E\) by \(1\) while preserving \(V\) and \(\chi=2\).}
\label{fig:rect-pyramid-S1}
\end{figure}
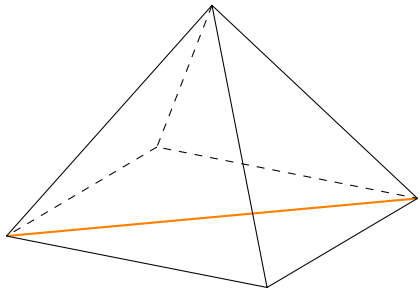

\vspace{1em}

\paragraph{\tmtextbf{Triangular Bipyramid ($N_i = 1$):}}Composed of two
tetrahedra joined along a shared triangular face, this structure has one
gluing triangle (blue): $N_i = 1$.

\begin{figure}[H]
\centering
\tdplotsetmaincoords{70}{120}
\begin{tikzpicture}[tdplot_main_coords, line cap=butt, line join=round]
    \path
    (0,0,0) coordinate (A)
    (4,0,0) coordinate (B)
    (2,2.309,0) coordinate (C)
    (2,2.309,4) coordinate (S)
    (2,2.309,-4) coordinate (N);
    
    \fill[blue, opacity=0.3] (A) -- (B) -- (C) -- cycle;
    \draw[dotted] (A) -- (B) -- (C) -- cycle;
    \draw (B) -- (C);
    \draw[dotted] (S) -- (A) (A) -- (C) (A) -- (B) (N) -- (A);
    \draw (S) -- (B) (S) -- (C);
    \draw (N) -- (B) (N) -- (C);
\end{tikzpicture}
\caption{Triangular bipyramid obtained by gluing two tetrahedra along a common triangular face (highlighted in blue). 
This shared face corresponds to exactly one internal gluing triangle, hence $N_i = 1$ in the symbolic model.
It provides the simplest example of how $N_i$ is counted: each such internal face removes two boundary triangles in the fully triangulated representation,
linking interior adjacency directly to external face counts via Proposition~\ref{prop:boundary-from-interior}.}
\end{figure}
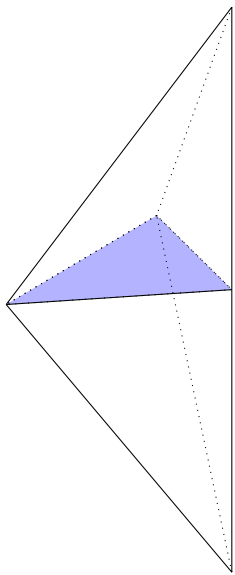

\vspace{1em}

\paragraph{\tmtextbf{Irregular Octahedron ($S_i = E_i = 1$, $N_i = 4$):}} When
regular, constructed by gluing two rectangular pyramids base-to-base. In the
case of this irregular octahedron, the internal meshwork between the glued
tetrahedra must be constructed by adding one internal segment functioning as an internal edge, yielding $S_i = E_i =1$.
Once included, four gluing triangles would emerge, resulting in $N_i = 4$.

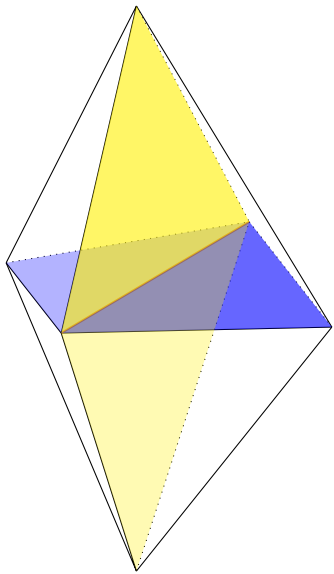
\begin{figure}[H]
\centering
\tdplotsetmaincoords{70}{120}
\begin{tikzpicture}[tdplot_main_coords, line cap=butt, line join=round]
    \path
    (0,0,0) coordinate (O)
    (2,0,0) coordinate (A)
    (-3,0,0) coordinate (B)
    (0,3,0) coordinate (C)
    (0,-2,0) coordinate (D)
    (0,0,4) coordinate (S)
    (0,0,-4) coordinate (N);
    
    \draw (A) -- (C) (D) -- (A);
    \draw[orange, thick] (A) -- (B);
    \draw (S) -- (A) (S) -- (C) (S) -- (D);
    \draw[dotted] (D) -- (B) (N) -- (B) (B) -- (C) (S) -- (B);
    \draw (N) -- (A) (N) -- (C) (N) -- (D);
    \fill[blue, opacity=0.6] (A) -- (B) -- (C) -- cycle;
    \fill[blue, opacity=0.3] (D) -- (A) -- (B) -- cycle;
    \fill[yellow, opacity=0.6] (S) -- (A) -- (B) -- cycle;
    \fill[yellow, opacity=0.3] (N) -- (A) -- (B) -- cycle;
\end{tikzpicture}
\caption{Irregular octahedron constructed by gluing two quadrilateral-based pyramids along their bases. 
The orange segment is an internal triangulation segment ($S_i = 1$) required to fully mesh the internal volume into tetrahedra without adding new vertices. 
This configuration produces four internal gluing triangles ($N_i = 4$), shown in blue and yellow, illustrating how $S_i$ and $N_i$ together describe the combinatorial scaffolding of the internal decomposition.}

\end{figure}

\section{Preliminaries}\label{sec:prelim}

\subsection{Normal form and standing assumptions}\label{subsec:normal-form}
\begin{definition}[Normal form]\label{def:normal-form}
A genus-$0$ polyhedron is in normal form if:
(i) all vertices lie on the boundary (no interior/Steiner vertices, so $V_i\equiv 0$);
(ii) edges are straight segments that intersect only at shared endpoints (boundary $1$-skeleton is a planar straight-line graph);
(iii) every interior edge is an internal triangulation segment between boundary vertices and is incident to exactly two distinct triangular faces; and
(iv) internal adjacencies occur only along triangular faces (no edge bisections).
\end{definition}

\begin{remark}[Notation and scope]
Throughout, $S:=\sum_f(\deg f-3)$ denotes flatness. Unless stated otherwise, all identities tagged “Proposition”
are proved under normal form; anything tagged “Heuristic” or “Conjectural Proposition” is indicated where used.
\end{remark}

\section{Methods and Heuristics}

This section introduces both the \emph{proved combinatorial identities} and the \emph{empirical heuristics}
that underlie the framework, along with the triangulation logic that directs it towards constructing
a polyhedron's internal mesh.

\subsection{Logical separation of results}

Two classes of structural constraints are used in this work:
\begin{enumerate}
    \item \textbf{Proved genus-zero identities:}  
    These are exact, rigorously derived equalities and bounds that hold for all genus-zero polyhedra in \emph{normal form}.  
   Chief among the proved results is the flatness identity
\[
E = 3V - 6 - S,\qquad F = 2V - 4 - S,
\]
where $S:=\sum_f(\deg f-3)$ (Proposition~\ref{prop:ext-formulas}).  
Combining this with the minimum degree condition $\deg(v)\ge 3$ yields
parity-sensitive extremal bounds on $E$, $F$, and $S$
(Lemma~\ref{lem:emin-trivalent} and Corollary~\ref{cor:S_upper_bound-F_lower_bound}).

    \item \textbf{Empirical and conjectural constraints:}  
    These are patterns and formulas suggested by extensive manual and computational enumeration but not proved in general.
    A key example is the extended Euler-type identity
    \[
    V - E + F = 2(T - N_i + S_i),
    \]
    where $T$ is the number of internal tetrahedra, $N_i$ is the number of internal gluing triangles,
    and $S_i$ is the number of internal triangulation segments.
    This identity is \emph{proved} for certain construction sequences but remains a \emph{conjectural proposition} in full generality.
\end{enumerate}

In what follows, the derivation of heuristic bounds and decomposition strategies
is always annotated with the appropriate tag (\emph{Heuristic}, \emph{Proposition}, or \emph{Conjectural proposition}).

\subsection{Empirical observation and validation}

Through iterative experimentation, I tracked the effects of adding or removing
one structural element at a time. On the surface, I examined vertices, edges,
faces, and triangulation segments; internally, I observed how tetrahedra
filled the polyhedron and how their gluing faces were defined by internal
triangulation segments. I recorded all countable variables and analyzed their
relationships. These empirical observations guided both the decomposition
strategy and the derivation of formulas describing how both external and
internal structures scale with vertex count and triangulation complexity
(i.e., surface flatness). Extremal values of these variables were described as
functions of $V$, and sequences were defined to characterize how they change
relative to one another with increasing flatness for fixed $V$.

\subsection{Surface flatness estimation}

To quantify \emph{flatness} on the surface, I use the face-degree excess
\[
S:=\sum_f(\deg f-3),
\]
which equals the total number of non-intersecting diagonals needed to triangulate
all polygonal faces. Thus $S=0$ if and only if every face is triangular, and
increasing $S$ corresponds to fewer faces and edges relative to the triangulated
case via Proposition~\ref{prop:ext-formulas}.

\begin{remark}
Because the embedding is planar with all faces homeomorphic to disks
(i.e., a $2$--cell embedding of $S^2$), each edge is incident to exactly two
distinct faces.
\end{remark}

Each polygonal face is triangulated by drawing non-intersecting segments
between the nearest connectable vertices. The cumulative count of these added
segments on all external faces constitutes the flatness measure $S$. When
all faces of a polyhedron are triangles, $S = 0$ and $F \equiv F_{\mathrm{ub}}$.

\subsection{Internal mesh construction from surface triangulation}

Unlike the external surface of a polyhedron, which is both observable and abstractable with consistency,
the internal structure must be inferred. Consequently, the number of tetrahedra required
to tetrahedralize a polyhedron, such as a cube, depends critically on the configuration of face cuts.
For example, as noted by Timalsina and Knepley~\cite{timalsina2023tetrahedralization},
a cube may be divided into 5, 6, or 12 tetrahedra depending on the chosen strategy:
five tetrahedra in an ideal face-aligned configuration, six using the marching tetrahedra algorithm,
and twelve when a Steiner point is introduced at the cube’s center.

Since the goal of this exploration is to maximally preserve the original topology of the polyhedron,
I adopt a triangulated, face-aligned tetrahedralization strategy.
For this purpose, the internal mesh is constructed solely from external constraints,
namely, the original vertices and triangulated faces.

To support this construction, internal triangulation segments are introduced to define the internal gluing
triangles along which tetrahedra are joined. Each internal segment contributes once to the structural count \(S_i\),
and each internal triangle used for gluing contributes once to the structural count \(N_i\),
regardless of how many tetrahedra share them. Together, these inferred internal edges and triangle faces form
a combinatorial scaffold that enables complete internal tetrahedralization while preserving the external surface geometry.

\subsection{Normal-form restrictions on interior structure}

Restricting attention to \emph{normal form} constructions: all vertices lie on the boundary of the enclosing polygonal face,
and no two edges intersect except at shared endpoints. In particular, interior vertices $V_i$ (vertices not incident to the boundary) cannot arise under these constraints,
so $V_i \equiv 0$ in all cases considered. Moreover, every interior edge $E_i$ is necessarily an internal triangulation segment $S_i$,
each belonging to exactly two distinct triangular faces. These conditions ensure that the triangulation is a planar straight-line graph in the classical sense,
with all internal segments disjoint except at endpoints.

\subsection{Tetrahedral decomposition and trivalent vertex prioritization}
Original edges, surface triangulation segments, internal gluing triangles, and
internal triangulation segments constitute the structure necessary to
decompose the polyhedron into tetrahedra. These tetrahedra are joined along
fully triangulated internal faces and, unless fully embedded, are partially or
fully coned outward toward external triangulated faces. The coning of fully
embedded tetrahedra is progressively revealed as the surrounding shell of
tetrahedra is ``chipped away'' one at a time.

Tetrahedra are incrementally removed by eliminating vertices. Priority is
given to trivalent vertices, those connected to exactly three edges, since
each such vertex typically defines a single tetrahedron when coned to its
surrounding triangular link. This targeted removal minimizes combinatorial
complexity, which can occur when higher-valency vertices are eliminated
without the right internal triangulation, potentially leading to more complex
subdivisions and increased tetrahedral counts. When it is necessary to remove
more complex vertices or parts of the polyhedron, the separation must occur
across a single plane dividing coplanar vertices. Failing to respect such planar
separations can result in the bisection of edges and faces, thereby increasing
internal complexity.

This approach differs from geometric decompositions which rely on slicing
through faces or bisecting edges to create tetrahedral meshes. In contrast,
the method presented here preserves polygonal integrity: despite
triangulation, all faces remain whole, with fixed vertex sets. No edge
bisection is allowed and no midpoint or interior vertex insertion (Steiner points) is permitted. As internal tetrahedra are
constructed exclusively through explicit gluing at existing vertices, internal adjacencies occur only along triangular faces. 

This work restricts attention to genus--$0$, normal--form decompositions built by peeling one outer shell of tetrahedra at a time, a procedure I will refer to informally as a \emph{shell--aligned ladder tetrahedralization} (SALT). 
Each layer is created by coning existing boundary triangles. Within this scheme, a ``minimal interior edge'' (MIE) rule is observed: a new interior segment between boundary vertices is introduced only when strictly necessary to complete the current triangular gluing layer; otherwise, none is added.
This SALT+MIE restriction excludes dense interior diagonals between boundary vertices (e.g.~pulling triangulations of cyclic polytopes), which can create $\Theta(V^2)$ tetrahedra without adding vertices. 

\subsection{Validations of observed relations by Euler's characteristic}

Euler’s characteristic is used here to validate the internal consistency of the framework.

The first validation is purely surface-based and corresponds to
\emph{Proposition~\ref{prop:ext-formulas}} together with
\emph{Corollary~\ref{cor:S_upper_bound-F_lower_bound}}. Beginning from the classical Euler
characteristic and, in the genus-zero case, using incidence counting, I derive the proved flatness identity
\[
E = 3V - 6 - S, \qquad F = 2V - 4 - S,
\]
together with its associated extremal bounds. This establishes the exact relationship between
$V$, $E$, $F$, and $S$ for genus-zero polyhedra in normal form.

The second validation links surface quantities to internal decomposition variables. Using
\emph{Proposition~\ref{prop:boundary-from-interior}}, which expresses the boundary counts in terms of
$T$ (internal tetrahedra), $N_i$ (internal gluing triangles), and $S_i$ (internal triangulation segments),
I obtain the extended Euler-type heuristic
\[
V - E + F = 2(T - N_i + S_i).
\]
This reduces to the classical value $\chi=2$ because $T - N_i + S_i = 1$ along that ladder. Outside those sequences the identity
is treated as a conjectural proposition pending a general proof.

\subsection{Results taxonomy}

Throughout the paper, statements are tagged as \emph{Heuristic}, \emph{Proposition}, or \emph{Conjectural proposition}.
\emph{Heuristic} marks empirically validated bounds or patterns supported by computations and worked examples but not proved in full generality.
\emph{Proposition} denotes identities for which a complete proof is provided under stated hypotheses.
\emph{Conjectural proposition} designates candidate general identities whose proof would require showing that certain construction sequences cover all admissible decompositions.

\

\section{Vertex-Driven External Polyhedral Properties: Intermediate Flatness Realizability and Parity Effects}
\label{sec:realizability}

I establish extremal bounds on external combinatorial quantities
($F$, $E$, and $S$) directly as functions of the vertex
count $V$. These bounds emerge from Euler’s characteristic combined with 
flatness-based adjustments, and serve as the foundation for linking external 
structure to the internal decomposition results in later sections. 
This section establishes the admissible parity-sensitive bounds on $E$, $F$, and $S$
and presents heuristic evidence that every integer flatness value
$0 \le S \le S_{\mathrm{ub}}$ is realizable by some genus-zero polyhedron in normal form.
The realizability result is stated formally in \emph{Proposition~\ref{prop:realizability}}.

The derivation proceeds from a given number of vertices $V \in \mathbb{N}$, 
showing how faces, edges, and triangulation segments co-evolve under 
Euler-consistent constraints. I examine parity effects in the incidence structure.  
Although Euler’s characteristic $\chi = V - E + F$ is invariant under vertex-parity changes,
quantities such as $E$, $F$, and $S$ can vary systematically between even and odd $V$.
These parity-sensitive patterns have implications for both surface enumeration
and internal tetrahedral decomposition strategies. From this, I obtain a feasible range of structural 
properties for a closed 3D polyhedron and assess their validity using Euler's formula.

\subsection{Vertex Basis and Initial Angle Unit Calculation}

Let $V \geq 4$ be the number of vertices in the polyhedron. I define:
\begin{itemize}
  \item $N$: the total number of angle units on the external surface, where
  {\tmstrong{one unit = 180{\textdegree}}}, i.e., the internal angle of a
  triangle.
  
  \item I assume full triangulation of the surface, where polygonal faces are
  split into triangles using non-intersecting internal segments through
  alternating vertices.
\end{itemize}
Then the total number of surface angle units is:
\[
N = 2 (V - 2).
\]

\begin{remark}[Triangulated baseline and flattening intuition]
When the boundary is fully triangulated (i.e.\ $S=0$), one has
$F_{\mathrm{ub}}=2V-4$ and $E_{\mathrm{ub}}=3V-6$. Interpreting $S=\sum_f(\deg f-3)$ as the
total triangulation deficit, each unit increase of $S$ corresponds to removing
one diagonal from a triangulated refinement, decreasing both $E$ and $F$ by one.
The formal identities are stated and proved in Proposition~\ref{prop:ext-formulas}.
\end{remark}

\begin{proposition}[External structure formulas]\label{prop:ext-formulas}
For a genus-$0$ polyhedral graph in normal form with $V$ vertices, $E$ edges,
$F$ faces, and face-degree excess
\[
S:=\sum_{f}(\deg f-3),
\]
the following identities hold:
\[
E = 3V - 6 - S, \qquad F = 2V - 4 - S.
\]
\end{proposition}

\begin{proof}
Because the embedding is a 2--cell embedding of $S^2$, each edge is incident to
exactly two faces, hence $\sum_f \deg f = 2E$. By definition,
$S=\sum_f(\deg f-3)=2E-3F$. Euler's formula gives $F=E-V+2$. Substituting,
\[
S = 2E - 3(E-V+2) = 3V-6-E,
\]
so $E=3V-6-S$, and then $F=E-V+2=2V-4-S$.
\end{proof}

\begin{lemma}[Lower bound edges from trivalence]\label{lem:emin-trivalent}
Let $G$ be the $1$-skeleton of a genus--$0$ polyhedron (simple, planar, and $3$-connected) with $V\ge 4$ vertices and $E$ edges. Then
\[
E \;\ge\; \left\lceil \frac{3V}{2} \right\rceil,
\]
with equality when:
\begin{enumerate}
\item $V$ is even and all vertices are trivalent (degree $3$); or
\item $V$ is odd and all but one vertex are trivalent, the remaining vertex being $4$-valent.
\end{enumerate}
\end{lemma}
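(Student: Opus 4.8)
The plan is to derive the bound directly from the handshaking lemma together with the trivalence hypothesis on vertices. First I would recall that for any polyhedral $1$-skeleton every vertex has degree at least $3$, which is forced by $3$-connectivity (a vertex of degree $\le 2$ would admit a separating set of size $\le 2$). Summing degrees over all vertices gives $\sum_{v} \deg(v) = 2E$ by the handshaking lemma, and the bound $\deg(v)\ge 3$ for every $v$ then yields $2E \ge 3V$, hence $E \ge 3V/2$. Since $E$ is an integer, we may round up to obtain $E \ge \lceil 3V/2\rceil$, which is the stated inequality.

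Next I would analyze the equality cases according to the parity of $V$. If $V$ is even, then $3V/2$ is already an integer and equality $E = 3V/2$ holds precisely when $2E = 3V = \sum_v \deg(v)$, i.e.\ when every vertex has degree exactly $3$; this is case (1). If $V$ is odd, then $3V$ is odd, so $2E \ge 3V$ forces $2E \ge 3V+1$, i.e.\ $E \ge (3V+1)/2 = \lceil 3V/2\rceil$. Equality $2E = 3V+1$ means $\sum_v (\deg(v) - 3) = 1$, so the degree sequence consists of all $3$'s except for a single vertex of degree $4$; this is case (2). In both cases one should also note that such degree sequences are in fact realizable by genuine polyhedra (e.g.\ the tetrahedron, cube, dodecahedron for case (1); simple prism-like or truncation constructions for case (2)), so the bound is sharp, though for the inequality itself realizability is not strictly needed.

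The only mild subtlety — and the one place I would be careful — is the direction of the claim: the lemma asserts $E \ge \lceil 3V/2\rceil$, which is a \emph{lower} bound arising from the minimum possible degree, and it should not be conflated with the complementary upper bound $E \le 3V-6$ coming from planarity (Proposition~\ref{prop:ext-formulas}). The argument above is entirely elementary and uses no planarity beyond the structural fact $\deg(v)\ge 3$; planarity and simplicity are what make $\deg(v)\ge 3$ automatic for polyhedral graphs, but the counting step itself is just handshaking. I do not anticipate a genuine obstacle here; the ``hard part'' is merely the bookkeeping of the parity-based ceiling and matching it to the two equality cases, both of which are routine.
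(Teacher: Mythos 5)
Your proof is correct and follows essentially the same route as the paper's: the handshake lemma combined with $\deg(v)\ge 3$ gives $2E\ge 3V$, hence $E\ge\lceil 3V/2\rceil$, with the two equality cases determined by the parity of $V$. Your additional remarks (deriving $\deg(v)\ge 3$ from $3$-connectivity and noting that equality forces the stated degree sequences, not merely that those sequences attain the bound) are a slight strengthening but do not change the argument.
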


\begin{proof}
Every vertex has degree at least $3$, so by the handshake lemma
\[
2E=\sum_{v\in V}\tmop{deg}(v)\ge 3V,
\]
which yields \(E\ge \lceil 3V/2\rceil\).  

If $V$ is even and all vertices have degree $3$, then \(2E=3V\) and \(E=3V/2\), attaining the bound.  

If $V$ is odd and all but one vertex are trivalent and the remaining vertex is $4$-valent, then \(2E=3(V-1)+4=3V+1\), so \(E=(3V+1)/2\), again attaining the bound.  

In the symbolic flatness framework, both equality cases correspond to $S=S_{\mathrm{ub}}$ and $F=F_{\mathrm{lb}}$ (see Corollary~\ref{cor:S_upper_bound-F_lower_bound}).
\end{proof}

\begin{corollary}[Parity restriction on the minimal edge count]\label{cor:oddV-ceiling}
Let $G$ be the $1$--skeleton of a genus--$0$ polyhedron in normal form (simple, planar, $3$--connected) with $V$ vertices and $E$ edges.  
If $V$ is odd, then $E\ne \frac{3V}{2}$ by integrality. Hence
\[
E_{\mathrm{lb}}=\left\lceil \frac{3V}{2}\right\rceil=\frac{3V+1}{2}.
\]
\end{corollary}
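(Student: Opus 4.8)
The plan is to derive the corollary directly from Lemma~\ref{lem:emin-trivalent} by combining it with a one-line integrality observation and then exhibiting an explicit realizing family to upgrade the lower bound to an attained minimum.

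First I would record the integrality step. Writing $V = 2k+1$ in the odd case, one has $3V = 6k+3$, so $3V/2 = 3k + \tfrac32 \notin \mathbb{Z}$; since $E \in \mathbb{Z}$ this immediately gives $E \neq \tfrac{3V}{2}$, and moreover $\lceil 3V/2 \rceil = 3k+2 = \tfrac{3V+1}{2}$. Lemma~\ref{lem:emin-trivalent} already supplies $E \ge \lceil 3V/2\rceil$, so in the odd case $E \ge \tfrac{3V+1}{2}$.

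Second, to see that this lower bound is attained (so that $E_{\min}$ is genuinely $\tfrac{3V+1}{2}$ rather than merely bounded below by it), I would invoke equality case~(2) of Lemma~\ref{lem:emin-trivalent} and check that the required degree sequence $(3^{\,V-1},4)$ is realized by some genus-$0$ polyhedron in normal form for every odd $V \ge 5$. For $V = 5$ the square pyramid works. For odd $V \ge 7$, I would start from a cubic $3$-connected planar graph $G$ on $V-1$ vertices (these exist for all even $V-1 \ge 6$, e.g.\ the $n$-gonal prisms with $2n = V-1$), subdivide one edge $uv$ by a new vertex $w$, and then add the face-diagonal from $w$ to a suitably chosen non-neighbouring vertex $z$ of the enlarged face bordering $uw$; this leaves $w$ with degree $3$, raises exactly one vertex $z$ to degree $4$, keeps all other vertices trivalent, preserves planarity, and yields $V$ vertices with $2E = 3(V-1)+4 = 3V+1$, i.e.\ $E = \tfrac{3V+1}{2}$.

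The only step that is not automatic is confirming that subdividing $uv$ and then adding the chord $wz$ preserves $3$-connectedness: the subdivision alone leaves the graph only $2$-connected at $w$, so one must verify, by a short case analysis on where a hypothetical $2$-vertex separator could lie, that choosing $z$ appropriately (on a face incident to $uw$ and non-adjacent to $w$) destroys all such separators. Granting this, the corollary follows at once: Lemma~\ref{lem:emin-trivalent} together with the integrality step gives $E \ge \tfrac{3V+1}{2}$, and the constructed family attains it, so $E_{\min} = \lceil 3V/2\rceil = \tfrac{3V+1}{2}$.
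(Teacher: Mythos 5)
Your proposal is correct, and it is in fact more complete than what the paper does. The paper's treatment of this corollary consists only of the integrality observation you give in your first step (odd $V$ forces $3V/2 \notin \mathbb{Z}$, so the lower bound $E \ge \lceil 3V/2\rceil$ from Lemma~\ref{lem:emin-trivalent} becomes $E \ge \tfrac{3V+1}{2}$); the question of whether this bound is actually \emph{attained} for every odd $V$ is explicitly deferred in the remark immediately following the corollary, which states that ``realization for all odd $V$ is a separate existence question'' beyond the paper's scope. Your second step — realizing the degree sequence $(3^{\,V-1},4)$ via the square pyramid for $V=5$ and, for odd $V\ge 7$, by subdividing an edge of a prism on $V-1$ vertices and adding a face chord from the new vertex — closes exactly that gap, turning the inequality into a genuine minimum. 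The edge and degree bookkeeping in your construction checks out ($2E = 3(V-1)+4$ gives $E=\tfrac{3V+1}{2}$), and you are right to flag the $3$-connectivity of the modified graph as the one nontrivial verification: the subdivision vertex $w$ initially has only $\{u,v\}$ as a separating pair, and the chord $wz$ must be chosen so that no other $2$-separator survives. That case analysis is routine for prisms but does need to be written out; with it supplied, your argument proves strictly more than the paper claims.
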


\begin{remark}
For odd $V$, the degree multiset achieving $E_{\mathrm{lb}}$ necessarily has one $4$--valent vertex and all others trivalent (by the handshake lemma). Realization for all odd $V$ is a separate existence question; examples exist for many $V$ (e.g., small explicit polyhedral graphs), but a full classification is beyond our scope.
\end{remark}

\begin{corollary}[Relating maximal flatness and minimal faces]\label{cor:S_upper_bound-F_lower_bound}
For $V\ge 4$, let $E_{\mathrm{lb}}=\lceil 3V/2\rceil$. Then with the formulas from Proposition~\ref{prop:ext-formulas}:
\[
S_{\mathrm{ub}}=3V-6-E_{\mathrm{lb}}
=\begin{cases}
\frac{3V}{2}-6, & \text{$V$ even},\\[2pt]
\frac{3V}{2}-\tfrac{13}{2}, & \text{$V$ odd},
\end{cases}
\qquad
F_{\mathrm{lb}}=2 - V + E_{\mathrm{lb}}
=\begin{cases}
\frac{V}{2}+2, & \text{$V$ even},\\[2pt]
\frac{V}{2}+\tfrac{5}{2}, & \text{$V$ odd}.
\end{cases}
\]
Moreover, as $S$ increases by $1$ (flattening step), both $E$ and $F$ decrease by $1$, and for
\(
k=0,1,\dots,S_{\mathrm{ub}}
\)
one has
\(
E=3V-6-k,\; F=2V-4-k.
\)
\end{corollary}

\begin{remark}
\textbf{On parity and the ``half'' constants.}
The halves (e.g., $S_{\mathrm{ub}}=\tfrac{3V}{2}-\tfrac{13}{2}$ for odd $V$) come from the integer lower bound $E_{\mathrm{lb}}=\lceil 3V/2\rceil$. Writing
\[
S_{\mathrm{ub}}=3V-6-E_{\mathrm{lb}}
\]
gives $S_{\mathrm{ub}}=\tfrac{3V}{2}-6$ when $V$ is even and $S_{\mathrm{ub}}=\tfrac{3V}{2}-\tfrac{13}{2}$ when $V$ is odd; these evaluate to integers at the respective parities. The corresponding formulas for $F_{\mathrm{lb}}=2 - V + E_{\mathrm{lb}}$ have the same source.
\end{remark}

\begin{proof}
Substitute \(E_{\mathrm{lb}}\) into the formulas in Proposition~\ref{prop:ext-formulas}.
\end{proof}

\subsection{Triangulation Sequences and the Invariance of Euler's Characteristic}
This proposition is included only to formalize the triangulated baseline and
the ``flattening ladder'' interpretation; the definitive identities are
Proposition~\ref{prop:ext-formulas} and Corollary~\ref{cor:S_upper_bound-F_lower_bound}.
\begin{proposition}[Triangulated extrema and invariance of $\chi$]\label{prop:triangulated-extrema}
For a genus-zero polyhedral surface composed entirely of triangles,
\[
F_{\mathrm{ub}}=2(V-2),\qquad E_{\mathrm{ub}}=3(V-2).
\]
If $S$ denotes the number of coplanarity merges of adjacent triangles (each merge decreases both $F$ and $E$ by $1$ while $V$ is fixed), then for all $S$ in the admissible range,
\[
F(S)=F_{\mathrm{ub}}-S,\qquad E(S)=E_{\mathrm{ub}}-S,\qquad V-E(S)+F(S)=2.
\]
\end{proposition}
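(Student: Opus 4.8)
The plan is to split the statement into its two halves — the closed-form extrema at $S=0$, and the behaviour under successive coplanarity merges — and to observe that the invariance of $\chi$ is an immediate consequence of the per-step bookkeeping. For the first half I would argue exactly as in Proposition~\ref{prop:ext-formulas} specialized to $S=0$: if the genus-zero surface is composed entirely of triangles, then double-counting face–edge incidences gives $3F_{\max}=2E_{\max}$, since every face has degree $3$ and every edge lies on exactly two faces. Feeding this into Euler's relation $V-E_{\max}+F_{\max}=2$ and eliminating, first $F_{\max}$ then $E_{\max}$, yields $F_{\max}=2(V-2)$ and $E_{\max}=3(V-2)$. (Equivalently one may simply cite Proposition~\ref{prop:ext-formulas} at $S=0$.)

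For the second half I would proceed by induction on $S$, the number of coplanarity merges performed from the maximally triangulated surface. The key local claim is that a single merge of two adjacent faces that are coplanar across their common edge deletes exactly that one edge, fuses the two incident faces into a single larger polygonal face, and touches no vertex. Hence each merge sends $(V,E,F)\mapsto(V,\,E-1,\,F-1)$. Iterating $S$ times from $(V,3(V-2),2(V-2))$ gives $E(S)=E_{\max}-S=3(V-2)-S$ and $F(S)=F_{\max}-S=2(V-2)-S$, which is the asserted configuration series (and matches Corollary~\ref{cor:Smax-Fmin}).

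The invariance of Euler's characteristic is then essentially automatic: since each merge changes $E$ and $F$ by the same amount and leaves $V$ untouched, the alternating sum is preserved, because
\[
V-(E-1)+(F-1)=V-E+F .
\]
Starting from $V-E_{\max}+F_{\max}=2$, we conclude $V-E(S)+F(S)=2$ for every $S$ reached by such merges, as claimed.

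I expect the only real subtlety — the ``main obstacle'' — to be not the arithmetic but the phrase \emph{``for all $S$ in the admissible range''}: one must know that the sequence of merges can actually be carried out while the intermediate objects remain legitimate genus-zero polyhedral surfaces (simple, $3$-connected, no digonal faces), so that the local delta $(E,F)\mapsto(E-1,F-1)$ is always valid. I would handle this by taking ``admissible range'' to mean precisely the set of $S$ for which such a merge sequence exists, and by pointing to the explicit constructions in Heuristic Proposition~\ref{prop:realizability} (e.g.\ peeling successive edges of a stacked triangulation) which exhibit, for each $S$ in $0\le S\le S_{\max}$, a concrete chain of merges realizing the stated $(E,F)$ and hence the value $\chi=2$ throughout.
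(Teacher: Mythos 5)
Your proposal is correct and follows essentially the same route as the paper's proof: the incidence count $3F_{\max}=2E_{\max}$ combined with Euler's formula for the extrema, then the per-merge bookkeeping $(V,E,F)\mapsto(V,E-1,F-1)$ to get $F(S)$, $E(S)$, and the invariance of $\chi$. Your added remark on what ``admissible range'' should mean (and the pointer to the realizability construction) is a reasonable clarification but not a departure from the paper's argument.
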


\begin{proof}
Each triangle has three edges and each edge is shared by two triangles, so $2E_{\mathrm{ub}}=3F_{\mathrm{ub}}$. Substituting $E_{\mathrm{ub}}=\tfrac{3}{2}F_{\mathrm{ub}}$
into Euler’s formula $V-E_{\mathrm{ub}}+F_{\mathrm{ub}}=2$ yields $F_{\mathrm{ub}}=2(V-2)$ and $E_{\mathrm{ub}}=3(V-2)$. A coplanarity merge of two adjacent triangles removes
one surface edge and one triangle, so $F\mapsto F-1$ and $E\mapsto E-1$ with $V$ unchanged. Hence $V-E+F$ is invariant and equal to $2$ for all admissible $S$.
\end{proof}

\paragraph{Equivalent parameterization for fixed $V$ from the lower extremum.}
Equivalently, indexing the flattening ladder from the lower extremal configuration
$(S_{\mathrm{ub}},E_{\mathrm{lb}},F_{\mathrm{lb}})$, one may write
\[
S_k = S_{\mathrm{ub}} - k, \qquad
E_k = E_{\mathrm{lb}} + k, \qquad
F_k = F_{\mathrm{lb}} + k,
\]
for $k = 0,1,\dots,S_{\mathrm{ub}}$.
This parameterization enumerates all admissible external configurations consistent
with the flattening rule and Euler invariance.

This result shows that the maximal triangulated case (\(S=0\)) fixes $F$ and $E$ uniquely in terms of $V$, and that the Euler characteristic is preserved under
a sequence of coplanarity merges parameterized by $S$. In our framework, $S$ also functions as a discrete measure of flatness and appears in parity-sensitive
bounds in this section, and will reappear in linking external and internal structures (\S5). Appendix~A lists worked examples for $V=4$ to $V=7$ illustrating
these sequences and confirming $V-E+F=2$ in all cases.

\begin{heuristicproposition}[Flatness values in the admissible range can be realized]\label{prop:realizability}
Fix $V\ge 4$. For every integer $S$ with $0\le S\le \frac{3V}{2}-6$ when $V$ is even and $0\le S\le \frac{3V}{2}-\tfrac{13}{2}$ when $V$ is odd, one can construct a genus-$0$ polyhedron in normal form (equivalently, a simple $3$–connected planar graph) with
\[
E=3V-6-S,\qquad F=2V-4-S.
\]
In particular, every integer flatness value in the admissible range is realized.
\end{heuristicproposition}

\begin{proof}[Proof sketch]
Start from any maximally planar (triangulated) $3$–connected planar graph on $V$ vertices (e.g., a stacked triangulation), which has $S=0$, $E=3V-6$, $F=2V-4$. 
A ``coplanarity merge’’ of two adjacent boundary triangles corresponds combinatorially to deleting their shared edge, which decreases both $E$ and $F$ by $1$ and keeps $V$ fixed. 
In a stacked triangulation one can choose a set of $V-4$ edges (e.g., along successive stacked faces) whose deletion preserves simplicity, planarity, and $3$–connectivity; performing $S$ such merges yields a graph with the claimed $(E,F)$.
\end{proof}

\subsection{External Angle Unit Bounds}

From the flatness-based identities in Proposition~\ref{prop:ext-formulas}, 
and holding $V$ fixed, the external angle unit range can be expressed as:
\[
M_{\mathrm{ub}} = 10 (V - 2), \qquad 
M_{\mathrm{lb}} = \begin{cases}
4V + 6, & \text{$V$ odd},\\
4V + 4, & \text{$V$ even}.
\end{cases}
\]

\subsection{Counting Distinct Combinatorial Configurations}

By combining Proposition~\ref{prop:ext-formulas} and Corollary~\ref{cor:S_upper_bound-F_lower_bound},  
the total number of distinct $(F,E)$ configurations attainable by varying $S$ over 
its admissible range is:
\[
\text{Combinations} = S_{\mathrm{ub}} + 1 =
\begin{cases}
\frac{3}{2} V - \tfrac{11}{2}, & \text{$V$ odd},\\
\frac{3}{2} V - 5, & \text{$V$ even}.
\end{cases}
\]

\begin{heuristic}[Realizability across intermediate flatness]\label{heur:external-realizability}
Proposition~\ref{prop:ext-formulas} and Lemma~\ref{lem:emin-trivalent} imply the
admissible parity-sensitive ranges for $E$, $F$, and $S$ in
Corollary~\ref{cor:S_upper_bound-F_lower_bound}. What remains heuristic in this work is the claim
that \emph{every} integer $S$ in the admissible range is realizable by a simple
$3$--connected planar graph in normal form; see
Heuristic Proposition~\ref{prop:realizability}.
\end{heuristic}

\paragraph*{Angle units.}
One ``angle unit'' equals the interior angle of a triangle, i.e., $180^\circ$.
On a fully triangulated boundary, each triangle contributes exactly one unit and
\[
N = F_{\triangle} = F + S.
\]
Under an interior decomposition into $T$ tetrahedra with $N_i$ internal gluing triangles, 
counting triangle units gives
\[
N = 4T - 2N_i,
\]
since each tetrahedron contributes four triangle units and every internal gluing triangle 
removes two boundary triangle units by pairing. These two expressions for $N$ are consistent 
with Proposition~\ref{prop:boundary-from-interior} and Proposition~\ref{prop:ext-formulas}: 
indeed $F+S=4T-2N_i$. This connection between external angle counts and internal tetrahedral 
composition will be used in \S5 to link internal volumetric parameters directly to external 
face and edge counts.

\section{Internal Decomposition and Tetrahedral Estimation}

This section shifts from surface validation to internal volumetric structure.
We record exact incidence identities governing tetrahedral decompositions of
genus--0 polyhedra in normal form, then study how restricted construction rules
(SALT+MIE) and surface structure constrain realizability.

\paragraph*{Scope disclaimer.}
Throughout, we restrict attention to \emph{normal--form}
decompositions: all vertices lie on the boundary, no interior vertices are
introduced, and all tetrahedra are glued face-to-face. When we invoke
\emph{SALT+MIE}, we mean the further restricted construction class used in
\S4 (shell-aligned layering; no edge bisections; at most one new interior
segment per shell layer).

\subsection{Determining admissible tetrahedral ranges from polygonal surface data}
\label{sec:tetrahedral-range-method}

We consider the following problem: given a closed genus--0 polyhedral surface
specified by a multiset of polygonal faces, determine the admissible range of
tetrahedral decompositions consistent with purely symbolic (incidence-based)
constraints. 

\subsubsection{Input data and boundary normalization}

Let the surface consist of faces with degrees \(\deg f \ge 3\). From this data we
compute:
\[
V = \text{number of vertices}, \qquad
E = \text{number of boundary edges}, \qquad
F = \text{number of faces}.
\]
Define the surface flatness parameter
\[
S := \sum_f (\deg f - 3),
\]
which equals the number of diagonals in any triangulation of the polygonal faces
(without introducing new vertices).

Since triangulating a face of degree $k$ introduces exactly $k-3$ diagonals, summing over faces gives $S$. After triangulation, the boundary data become
\[
N := F + S \quad \text{(boundary triangles)}, \qquad
E_\partial := E + S \quad \text{(boundary edges)}.
\]
For genus--0 surfaces these satisfy
\[
N = 2V - 4, \qquad E_\partial = 3V - 6,
\]
so once \(V\) is fixed the triangulated boundary counts \((N,E_\partial)\) are fixed.

\subsubsection{Interior variables and the SALT+MIE identification}

Let
\[
T = \text{number of tetrahedra}, \qquad
N_i = \text{number of internal gluing triangles}, \qquad
S_i = \text{number of internal triangulation segments}.
\]
Here \(S_i\) counts the interior edges needed to fully triangulate the union of
internal gluing surfaces. In the SALT+MIE class used in \S4, \emph{every interior
edge arises as such a triangulation segment}, so \(S_i\) is also the total count
of interior edges in the tetrahedral complex. We therefore use \(S_i\) as the
interior-edge variable throughout (and do not introduce a separate \(E_i\)).

\subsubsection{Exact incidence identities (normal form)}

The following identities hold for normal-form tetrahedralizations with triangulated
boundary and no interior vertices.

\paragraph{Triangle incidence.}
Each tetrahedron contributes four triangular faces; boundary triangles are used
once and internal triangles are shared by two tetrahedra. Hence
\begin{equation}
4T = N + 2N_i.
\label{eq:triangle-incidence}
\end{equation}

\paragraph{Euler characteristic.}
Let \(E_\partial\) denote boundary edges after triangulation, and let \(S_i\) denote
interior edges (equivalently: interior triangulation segments in SALT+MIE). The
Euler characteristic of the 3--ball gives
\begin{equation}
V - (E_\partial + S_i) + (N + N_i) - T = 1.
\label{eq:euler-3ball-Si}
\end{equation}
Substituting the genus--0 boundary relations \(N=2V-4\) and \(E_\partial=3V-6\),
\eqref{eq:euler-3ball-Si} simplifies to the exact identity
\begin{equation}
\boxed{T - N_i + S_i = 1.}
\label{eq:TNisi}
\end{equation}

Appendix~B demonstrates that the identity \eqref{eq:TNisi} is robust when internal structure arises
solely from boundary-induced triangulation (the SALT+MIE setting), and that it fails once interior
Steiner points are introduced. Appendix~C, by contrast, demonstrates invariance under
surface-embedded edge Steiner points.

\subsubsection{Solution family and feasibility constraints}

Equations \eqref{eq:triangle-incidence} and \eqref{eq:TNisi} parameterize all incidence-admissible triples \((T,N_i,S_i)\) by the single integer \(T\). Solving
\eqref{eq:triangle-incidence} for \(N_i\) and substituting into \eqref{eq:TNisi}
yields
\begin{equation}
\boxed{
N_i(T) = 2T - \frac{N}{2}, \qquad
S_i(T) = N_i(T) - T + 1 = T - \frac{N}{2} + 1.
}
\label{eq:solution-family-Si}
\end{equation}
Thus, for a fixed polygonal surface (hence fixed \(V\) and fixed \(N=2V-4\)),
all admissible tetrahedralizations lie on the one-parameter integer family
\eqref{eq:solution-family-Si}, subject only to nonnegativity:
\[
N_i(T)\ge 0,\qquad S_i(T)\ge 0.
\]
From \(N_i(T)\ge 0\) one obtains the general incidence lower bound
\[
T \ge \left\lceil \frac{N}{4}\right\rceil
= \left\lceil \frac{2V-4}{4}\right\rceil
= \left\lceil \frac{V-2}{2}\right\rceil,
\]
which is necessary in any normal-form tetrahedralization. In the restricted
SALT+MIE class, the stronger construction minimum \(T_{\min}=V-3\) is observed
(\S4), and the ladder family below realizes all intermediate values.

\subsubsection{Restricted SALT+MIE ranges (construction-based)}

Within the SALT+MIE framework of \S5 (at most one new interior segment per layer,
and the first/last layers cannot host a new segment), one has
\[
0 \le S_i \le V-5.
\]
Substituting into the ladder relations of \S4 gives the construction-based ranges
\begin{align}
T &= V-3+S_i, \\
N_i &= V-4+2S_i,
\end{align}
hence
\begin{align}
T_{\min} &= V - 3, &
T_{\max} &= 2(V - 4), \\
N_{i,\min} &= V - 4, &
N_{i,\max} &= 3(V - 4) - 2, \\
S_{i,\max} &= V-5.
\end{align}
These SALT+MIE bounds are not independent: they are linked by the exact identity
\eqref{eq:TNisi}.

\subsection{Exact interior incidence identities (normal form)}

\begin{proposition}[Exact 3D Euler link]\label{prop:exact-euler}
Let $T$ be the number of tetrahedra in a genus--0 tetrahedralization,
$N_i$ the number of internal gluing triangles,
$E_i$ the number of interior edges,
and $V_i$ the number of interior vertices.
Then
\[
\boxed{T - N_i + E_i - V_i = 1.}
\]
\end{proposition}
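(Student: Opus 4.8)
The plan is to compute the Euler characteristic of the tetrahedralized $3$–ball in two ways. A genus-$0$ tetrahedralization is a finite CW-complex (indeed a simplicial complex) homeomorphic to a closed $3$–ball, which is contractible, so its Euler characteristic is $1$. On the other hand, $\chi$ is the alternating sum of the numbers of cells in each dimension. First I would fix notation for the \emph{total} face counts of the complex: let $V_{\mathrm{tot}}$, $E_{\mathrm{tot}}$, $F_{\mathrm{tot}}$, $T$ be the numbers of vertices, edges, triangles, and tetrahedra, so that
\[
V_{\mathrm{tot}} - E_{\mathrm{tot}} + F_{\mathrm{tot}} - T = \chi(B^3) = 1.
\]

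Next I would split each of the three lower-dimensional counts into its boundary part and its interior part: $V_{\mathrm{tot}} = V + V_i$, $E_{\mathrm{tot}} = E + E_i$, and $F_{\mathrm{tot}} = F_\triangle + N_i$, where $V, E, F_\triangle$ are the vertices, edges, and triangles lying on the boundary sphere $S^2 = \partial B^3$, and $V_i, E_i, N_i$ are the interior ones (an interior triangle being exactly a gluing triangle shared by two tetrahedra). Substituting and regrouping gives
\[
(V - E + F_\triangle) + (V_i - E_i + N_i) - T = 1.
\]
Since the boundary is a triangulated $2$–sphere, $V - E + F_\triangle = \chi(S^2) = 2$. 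Substituting this and rearranging yields $V_i - E_i + N_i - T = -1$, i.e. $T - N_i + E_i - V_i = 1$, which is the claimed identity.

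The only real subtlety — and the step I would state with care rather than hand-wave — is the bookkeeping that every $2$–cell of the complex is either a triangle on $\partial B^3$ or an interior triangle glued by exactly two tetrahedra, with no triangle belonging to three or more tetrahedra and no "dangling" triangles; this is precisely what it means for the tetrahedralization to be a genuine triangulation of a manifold-with-boundary, and it is what makes $N_i$ well defined. I would also note explicitly that the argument uses only the homeomorphism types $B^3$ and $S^2$, hence the word "always" in the statement: no normal-form hypothesis, no convexity, no bound on valences is needed. As a sanity check I would verify the identity on the single tetrahedron ($T=1$, $N_i=E_i=V_i=0$) and on the triangular bipyramid ($T=2$, $N_i=1$, $E_i=V_i=0$), both giving $1$, and observe that in normal form $V_i=0$ so the identity collapses to $T - N_i + E_i = 1$, matching Proposition~\ref{prop:TNisiexact} with $S_i \equiv E_i$.
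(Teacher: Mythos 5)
Your proposal is correct and follows essentially the same route as the paper's own proof: compute $\chi(B^3)=1$ as the alternating sum of cell counts, split each count into boundary and interior parts, and eliminate the boundary terms using $\chi(S^2)=2$ for the triangulated boundary sphere. Your added remarks on the manifold-with-boundary bookkeeping and the sanity checks are welcome refinements but do not change the argument.
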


\begin{proposition}[Angle units and internal gluings]\label{prop:Ni-exact}
Let a genus--$0$ polyhedron in normal form be tetrahedralized into $T$ tetrahedra
with $N_i$ internal gluing triangles.
After triangulating all boundary faces, the number of boundary triangles is
$N=2V-4$.
Counting triangle units gives
\[
4T - 2N_i = N \quad \Longrightarrow \quad
\boxed{N_i = 2T - V + 2.}
\]
\end{proposition}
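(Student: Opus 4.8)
The plan is to prove the two equalities in the displayed chain separately and then combine them. First I would pin down the boundary count $N=F_\triangle=F+S=2V-4$. Triangulating a face of degree $d$ by non-crossing diagonals produces exactly $d-2$ triangles, so after triangulating every face the total number of boundary triangles is
\[
F_\triangle=\sum_f(\tmop{deg} f-2)=\sum_f(\tmop{deg} f-3)+\sum_f 1=S+F,
\]
and substituting $F=2V-4-S$ from Proposition~\ref{prop:ext-formulas} gives $F_\triangle=2V-4$. Since each boundary triangle contributes exactly one angle unit ($=\pi$), we have $N=F_\triangle=2V-4$.

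Second — and this is the heart of the argument — I would double-count the triangular $2$-faces of the tetrahedral complex. Each of the $T$ tetrahedra has four triangular faces, so the number of (tetrahedron, triangular-face) incidences is $4T$. In a valid tetrahedralization of the $3$-ball (a strongly connected pseudomanifold with boundary), every triangular face is incident to \emph{either} exactly two tetrahedra — these are precisely the internal gluing triangles, counted by $N_i$ — \emph{or} exactly one tetrahedron — these are the boundary triangles, counted by $N=F_\triangle$. Splitting the incidence count accordingly yields $4T=2N_i+N$. Combining with the first step, $4T-2N_i=2V-4$, hence $N_i=2T-V+2$, as claimed.

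The main obstacle is justifying the dichotomy used in the second step: one must know that the decomposition is a genuine manifold-with-boundary (pseudomanifold) structure on the ball, so that no $2$-face is shared by three or more tetrahedra and no ``dangling'' interior triangle occurs. Under the normal-form / SALT hypotheses adopted here this holds because each layer is obtained by coning boundary triangles of a topological ball, so shelling off one tetrahedron at a time preserves the ball structure and its boundary sphere; for the statement as phrased one may simply invoke the standard fact that a triangulated $3$-ball is a manifold with boundary, whence each interior $2$-face lies on two $3$-cells and each boundary $2$-face on one.

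As a consistency check rather than a proof step, note that $4T-2N_i=N$ together with $N=F+S$ is equivalent to Proposition~\ref{prop:ext-formulas}, and combining $N_i=2T-V+2$ with the exact Euler link $T-N_i+S_i=1$ (the normal-form specialization of Proposition~\ref{prop:exact-euler}) forces $T=V-3+S_i$, matching the SALT ladder formulas of \S4; so the proposed identity is compatible with everything established earlier.
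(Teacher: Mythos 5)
Your proposal is correct and follows essentially the same route as the paper: the boundary count $F_\triangle=\sum_f(\tmop{deg} f-2)=F+S=2V-4$ via Proposition~\ref{prop:ext-formulas}, and the incidence double-count $4T=2N_i+N$ (each tetrahedron contributing four triangular faces, each internal gluing triangle absorbed by exactly two of them). Your explicit justification of the one-or-two-tetrahedra dichotomy for $2$-faces is a welcome refinement of a step the paper leaves implicit, but it is the same argument.
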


\begin{proposition}[Normal-form specialization]\label{prop:TNisiexact}
In normal form, $V_i \equiv 0$ and every interior edge is an internal
triangulation segment; define $S_i := E_i$.
Then
\[
\boxed{T - N_i + S_i = 1,}
\]
and combining with Proposition~\ref{prop:Ni-exact} yields
\[
\boxed{T = V - 3 + S_i.}
\]
\end{proposition}

These identities are exact and independent of any construction method.

\subsection{General solution family and incidence bounds}

For fixed boundary data (hence fixed $V$), Propositions
\ref{prop:Ni-exact}--\ref{prop:TNisiexact} parameterize all admissible triples
$(T,N_i,S_i)$ by the integer variable $T$, subject only to nonnegativity.
In particular,
\[
N_i(T) = 2T - V + 2, \qquad
S_i(T) = T - V + 3.
\]

From $N_i \ge 0$ one obtains the general incidence bound
\[
T \ge \left\lceil \frac{V-2}{2} \right\rceil,
\]
which holds for any normal-form tetrahedralization.

\subsection{SALT ladder family}

Within SALT+MIE, a transparent subfamily is the “ladder” construction where a polygon is decomposed into tetrahedra by the identification of internal triangular gluing faces one by one. For $k=0,\ldots,V-5$,
\[
T_k = V-3+k,\quad N_{i,k} = V-4+2k,\quad S_{i,k}=k,
\]
and $T_k-N_{i,k}+S_{i,k}=1$ holds exactly.

For fixed $V$, the ladder construction realizes exactly $V-4$
distinct admissible triples $(T,N_i,S_i)$ under SALT+MIE.

\

\begin{heuristic}[Internal ranges and configuration ladder]\label{heur:internal-ranges}
The ranges for $(T_{\min},T_{\max})$, $(N_{i,\min},N_{i,\max})$, and $S_{i,\max}$, together with the ladder
\[
T_k=V-3+k,\quad N_{i,k}=V-4+2k,\quad S_{i,k}=k\quad (0\le k\le V-5),
\]
match all worked examples and constructions tested to date, but are presently justified empirically rather than by a complete classification.
\end{heuristic}

\subsection{Empirical truncation at fixed $V$}

Although the ladder exhausts the algebraic SALT+MIE range, realizability at
fixed $V$ depends on the external surface structure $(E,F,S)$. Table~\ref{tab:V8} (for $V=8$) exhibits a sharp compression: as surface
flatness $S$ increases, progressively fewer ladder rungs are realized.
For example, when $S=6$ (the cube), only $T=5$ and $T=6$ occur, despite the
algebraic allowance of $T=7,8$.
Thus the effective maximum is strictly below the incidence ceiling.

\begin{table}[H]
\centering
\caption{Valid $(T,N_i,S_i)$ configurations observed at $V=8$ under SALT+MIE,
grouped by external flatness $S$. Here $E$ denotes the number of boundary edges
prior to triangulating non-triangular faces. For fixed $V=8$, the triangulated
boundary satisfies $E+S=3V-6=18$.}
\label{tab:V8}
\setlength{\tabcolsep}{7pt}
\renewcommand{\arraystretch}{0.9}
\begin{tabular}{ccccc}
\toprule
$E$ & $S$ & $T$ & $N_i$ & $S_i$ \\
\midrule
18 & 0 & 5 & 4  & 0\\
18 & 0 & 6 & 6  & 1\\
18 & 0 & 7 & 8  & 2\\
18 & 0 & 8 & 10 & 3\\
\midrule
17 & 1 & 5 & 4 & 0\\
17 & 1 & 6 & 6 & 1\\
17 & 1 & 7 & 8 & 2\\
\midrule
16 & 2 & 5 & 4 & 0\\
16 & 2 & 6 & 6 & 1\\
\midrule
15 & 3 & 5 & 4 & 0\\
15 & 3 & 6 & 6 & 1\\
\midrule
14 & 4 & 5 & 4 & 0\\
14 & 4 & 6 & 6 & 1\\
\midrule
13 & 5 & 5 & 4 & 0\\
13 & 5 & 6 & 6 & 1\\
\midrule
12 & 6 & 5 & 4 & 0\\
12 & 6 & 6 & 6 & 1\\
\bottomrule
\end{tabular}
\end{table}

From these configurations, the following empirical tendencies are observed for eight-vertex polyhedra:

\begin{itemize}
  \item When $S = 0$: all $V-4$ ladder configurations are realized.
  \item When $S = 1$: the top ladder rung $(S_i=S_{i,\max})$ is not realized.
  \item When $2 \le S \le S_{i,\max}$: only the two lowest ladder rungs
  $(S_i=0,1)$ are realized.
\end{itemize}

This compression is empirical and specific to  $V$; it reflects the
interaction between boundary structure and volumetric packing rather than
a failure of the incidence identities.

\subsection{Flexibility and Inflexibility of Internal Arrangements at Fixed $V$}

The preceding sections establish admissible ranges for the number of
tetrahedra $T$ and internal gluing triangles $N_i$ for a polyhedron with
fixed boundary vertex count $V$ under the SALT+MIE construction rules.
However, these bounds alone do not distinguish between configurations that
admit multiple internal realizations and those that are structurally rigid.
In this section, we introduce a qualitative distinction between
\emph{inflexible} and \emph{flexible} internal arrangements, based on whether
the surface triangulation budget $S$ can vary while $V$ and $T$ are held
fixed.

\subsubsection{Definition: $S$-flexibility}

Fix a boundary vertex count $V$ and a tetrahedral count $T$.
We say that a configuration is \emph{$S$-inflexible} if all valid
SALT+MIE realizations with these fixed values of $(V,T)$ admit the same
value of $S$.
Conversely, a configuration is \emph{$S$-flexible} if there exist two valid
realizations with the same $(V,T)$ but different values of $S$.

Intuitively, $S$-inflexibility corresponds to a boundary and interior
structure that forces a unique surface triangulation compatible with a
valid tetrahedral fill, while $S$-flexibility indicates the presence of
discrete triangulation choices that do not alter the global counts $V$ or
$T$.

\subsubsection{Inflexible class: bipyramidal configurations}

The canonical examples of $S$-inflexible arrangements are
\emph{bipyramidal configurations}.
These consist of two apex vertices (one ``above'' and one ``below'')
connected to all vertices of a central polygonal cycle.
All faces are triangular, and the interior tetrahedra fan out from the
central polygon in a forced manner.

For such configurations:
\begin{itemize}
  \item the induced tetrahedral adjacency structure is fixed,
  \item no local rearrangement of internal gluing surfaces is possible
        without changing either $V$ or $T$,
  \item the surface triangulation budget $S$ is uniquely determined
        (typically $S=0$).
\end{itemize}

An explicit example is the case $(V,T)=(8,8)$, where the only realizable
configuration is a hexagonal bipyramid.
In this case, no alternative triangulation of the surface exists that is
compatible with a valid tetrahedral decomposition, and the configuration is
strictly $S$-inflexible.

\subsubsection{First deviation: single-vertex departure from bipyramidity}

Flexibility first appears when the bipyramidal structure is minimally
perturbed.
Specifically, if one vertex leaves the equatorial polygon and is reassigned
to either the upper or lower apex side, the central polygon is reduced by
one vertex.
One side of the polyhedron remains a pyramid, while the opposite side now
contains a single quadrilateral face.

This quadrilateral face admits exactly two triangulations, yielding two
distinct but valid surface configurations:
\begin{itemize}
  \item one realization with $S=0$,
  \item one realization with $S=1$.
\end{itemize}

Crucially, both realizations share the same values of $(V,T)$.
Thus, this class represents a \emph{minimally flexible} arrangement, with
exactly one discrete degree of freedom in $S$.
The interior accommodates one fewer tetrahedron than the corresponding pure
bipyramid, but remains otherwise tightly constrained.

\subsubsection{Higher deviations: increasing flexibility}

Additional flexibility arises when further vertices leave the equatorial
polygon.
For example:
\begin{itemize}
  \item two vertices may be reassigned to the same side, producing a
        pentagonal face;
  \item or one vertex may move upward while another moves downward,
        producing a quadrilateral face on each side.
\end{itemize}

Each non-triangular face introduces an independent triangulation choice.
As a result, the set of admissible values of $S$ grows, while $V$ and $T$
remain fixed.
These configurations are genuinely $S$-flexible, exhibiting multiple
realizable internal arrangements and a nontrivial range of surface
triangulation budgets.

\subsection{Interpretation}

This progression suggests a natural stratification of configurations at
fixed $V$:
\begin{center}
\emph{bipyramidal (rigid)} $\;\rightarrow\;$
\emph{single-defect (minimally flexible)} $\;\rightarrow\;$
\emph{multi-defect (flexible)}.
\end{center}

In this view, bipyramidal configurations form the rigid core of the
admissible space, while flexibility emerges precisely as vertices are
removed from the equatorial cycle.
Each such departure introduces a higher-degree face and hence an additional discrete triangulation choice
, reflected in the admissible range of surface flatness $S$.

This distinction clarifies why the admissible SALT+MIE bounds alone do not
determine $T$ or $S$ uniquely: realizability within those bounds depends
sensitively on the internal combinatorial structure, not solely on boundary
counts.
\paragraph{Remark (Incidence-Gated Transitions, Bipyramidal Rigidity, and Boundary Flexibility).}
Although the analysis throughout this work is entirely symbolic, it is often helpful to imagine vertices as concrete points in space and tetrahedra as occupying volume. This geometric picture is not essential, but it helps explain why certain discrete effects appear abruptly rather than gradually.

A key point is that the equatorial ``belt'' in a bipyramidal configuration need not be coplanar. Coplanarity is a red herring: what actually governs admissibility is the pattern of \emph{external boundary edges} and \emph{internal gluing segments}. As long as these incidence relations are preserved, the belt vertices can ripple or kink freely in space without any effect on the tetrahedral count or on other boundary quantities (edges and faces). Geometry supplies continuous freedom within each combinatorial state.

From this perspective, bipyramidal configurations occupy a special extremal position. Their rigidity is geometrically obvious: two apices constrain a belt of vertices so tightly that there is essentially no slack. However, this rigidity is largely invisible at the level of surface incidence counts alone. In informal terms, bipyramids are often mistaken for ``flexible'' objects combinatorially, even though they are maximally constrained volumetrically. This is why, if one insists on geometric metaphors for rigidity, the usual description of inflexible objects as ``squares'' is misleading: squares are flexible in triangulated settings and signal at least a degree of boundary flexibility, whereas bipyramids are not.

Departures from bipyramidity occur through discrete reclassifications of vertex roles rather than smooth geometric limits. The primitive defect is not an apex becoming coplanar with the belt, but the reassignment of a \emph{belt vertex}. When a single belt vertex ceases to attach symmetrically to both apices, remaining connected to only one apex while acquiring additional lateral connections, the belt shortens by one, an internal gluing segment is lost, and exactly one tetrahedron disappears. This produces the first unit of variability: two triangular faces may unite to form a boundary quadrilateral ($S=1$), the tetrahedral count drops by one, and the boundary geometry remains otherwise flexible, but now within a new combinatorial state.

A stronger transition occurs if deformation proceeds far enough that an \emph{apex--apex edge} appears. In this regime the belt is no longer merely reclassified locally; it is shortened more substantially, and multiple tetrahedra are lost simultaneously. In an eight-vertex system, for example, this transition produces a loss of two or three tetrahedra (depending on how the polygon is decomposed) relative to the bipyramidal maximum. These changes are discrete and unavoidable: intermediate geometric states do not register combinatorially because the system skips them by reorganizing its incidence structure.

This behavior illustrates a general asymmetry between boundary and interior. Continuous geometric variation of the boundary can induce discrete changes in interior capacity, while fixing the combinatorial structure of the tetrahedral decomposition leaves substantial freedom in the realization of the boundary. The resulting ``quantized'' behavior of tetrahedral and boundary counts is not a physical effect, but a consequence of incidence-gated admissibility: counts remain constant under arbitrary rippling until a gate is crossed, at which point they jump.

\subsection{Inferring external structure from internal decomposition}
\label{sec:infer-external-from-internal}

This subsection relates the internal tetrahedral decomposition of a genus--$0$
polyhedron to the combinatorics of its boundary surface. We recall the exact
topological identities for tetrahedralizations of a $3$--ball and their
normal--form specialization (already established in
Propositions~\ref{prop:exact-euler}--\ref{prop:TNisiexact}), then record a
useful incidence-based rule-of-thumb for reconstructing $(V,E,F)$ from interior
decomposition data.

\subsubsection{Exact Euler link and normal-form specialization}

For any tetrahedral complex filling a $3$--ball, the exact identity
(Proposition~\ref{prop:exact-euler})
\[
T - N_i + E_i - V_i = 1
\]
always holds. In the normal form adopted throughout this manuscript,
$V_i\equiv 0$ and every interior edge is an internal triangulation segment, so
$E_i=S_i$ and hence (Proposition~\ref{prop:TNisiexact})
\[
\boxed{\,T - N_i + S_i = 1\,}.
\]
Moreover, after triangulating all boundary faces, triangle incidence gives
(Proposition~\ref{prop:Ni-exact})
\[
\boxed{\,N_i = 2T - V + 2\,}.
\]
Together these identities link the interior parameters $(T,N_i,S_i)$ to the
boundary vertex count $V$ in a purely symbolic way.

\subsubsection{Heuristic incidence formulas for external structure}

The next statement is \emph{not} a general topological law; it is an incidence
calculation that assumes:
\begin{enumerate}
\item no interior Steiner vertices ($V_i=0$),
\item all internal adjacencies occur along triangular faces,
\item $S=\sum_f(\deg f-3)$ is the face-degree excess (equivalently the number of boundary triangulation diagonals),
\item $S_i$ counts internal triangulation segments completing gluing surfaces.
\end{enumerate}

\begin{heuristicproposition}[Boundary counts from interior decomposition]
\label{prop:boundary-from-interior}
Assume a genus--$0$ decomposition into $T$ tetrahedra without interior Steiner
vertices, where internal adjacencies occur only along triangular faces; let
$N_i$ be the number of internal gluing triangles, $S$ the number of coplanarity
merges on the boundary, and let $S_i$ denote the number of interior edges.
Then the boundary satisfies
\[
E=6T-3N_i-S,\qquad
F=4T-2N_i-S,\qquad
V=4T-3N_i+2S_i.
\]
\end{heuristicproposition}

\begin{proof}[Justification]
For a fully triangulated boundary, $F_\triangle=4T-2N_i$ and
$E_\triangle=\tfrac{3}{2}F_\triangle=6T-3N_i$ by triangle incidence.
Coplanarity merges reduce both $E_\triangle$ and $F_\triangle$ by $S$, yielding
the displayed formulas for $E$ and $F$. The $V$--formula is algebraically
equivalent to the exact identities $N_i=2T-V+2$ and $T-N_i+S_i=1$
(Propositions~\ref{prop:Ni-exact}--\ref{prop:TNisiexact}), hence consistent.
\end{proof}

\paragraph{Sanity check.}
In normal form, $T-N_i+S_i=1$ (Proposition~\ref{prop:TNisiexact}), so the above
formulas automatically satisfy $V-E+F=2$ after substitution.

\subsubsection{Heuristic extension of Euler's characteristic}

Substituting the heuristic expressions of
Proposition~\ref{prop:boundary-from-interior} into $V-E+F$ gives
\[
V - E + F
= 2\,(T - N_i + S_i).
\]

\begin{heuristic}[Extended Euler-type identity]\label{heur:euler-link}
Under the hypotheses of Proposition~\ref{prop:boundary-from-interior},
\[
V-E+F \;=\; 2\,(T - N_i + S_i).
\]
\emph{Note:} This is an incidence packaging of the boundary formulas above; it
is not asserted as a general topological law. In normal form,
$T-N_i+S_i=1$ (Proposition~\ref{prop:TNisiexact}), hence $V-E+F=2$ as expected.
\end{heuristic}

Appendix~F compiles all identities and bounds derived in Sections~4 and~5.

\section{A Combinatorial Method for Testing 3D Polygonal Enclosure and
Polyhedral Feasibility}

Section~6 formalizes a combinatorial method to assess polygon lists for
enclosure feasibility, including a ``flatness threshold'' test. The method
operates on a list of polygon types and their frequencies (e.g., 12 pentagons,
20 hexagons), without requiring coordinates or spatial modeling. It proceeds
in several steps:

\paragraph{\tmtextbf{Step 1: Polygon Frequency:}}\tmtextmd{Define the count of
each polygon type $P_k$.}

\

\paragraph{\tmtextbf{Step 2: Total Internal Angle Units (N)}:} Each $k$-gon
contributes $(k - 2)$ angle units of 180{\textdegree}. Compute total:
\[ N = \sum_{k = 3}^n (k - 2) P_k \]

\
 
\paragraph{\tmtextbf{Step 3: Vertex Count Estimate (V)}}  
For any genus-$0$ polyhedron in normal form (simple $3$–connected boundary; no edge bisections or interior boundary vertices), the vertex count is governed by a direct formula in terms of the face–degree counts.

\begin{proposition}\label{prop:V-from-N}
Let $P_k$ denote the number of faces of degree $k$, and define
\[
N \;=\; \sum_{k \ge 3} (k-2)\,P_k.
\]
Then for any finite simple graph with a 2--cell embedding in $S^2$ (genus $0$),
\[
\boxed{\;V \;=\; \frac{N}{2} + 2\;}
\]
\end{proposition}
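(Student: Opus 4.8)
The plan is to derive the identity purely from two counting relations and Euler's formula, without any reference to coordinates. First I would record the two elementary incidence identities for the boundary surface of a genus-$0$ polyhedron in normal form: summing over all faces gives $\sum_{k\ge 3} P_k = F$, and the face--edge incidence count (each edge lies on exactly two faces, since the surface is closed and the normal-form hypothesis forbids boundary edges or non-manifold adjacencies) gives $\sum_{k\ge 3} k\,P_k = 2E$.

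Next I would combine these to rewrite $N$ in terms of $E$ and $F$ alone:
\[
N \;=\; \sum_{k\ge 3}(k-2)P_k \;=\; \sum_{k\ge 3} k\,P_k \;-\; 2\sum_{k\ge 3} P_k \;=\; 2E - 2F.
\]
Then I would invoke Euler's formula in the genus-$0$ case, $V - E + F = 2$, which rearranges to $E - F = V - 2$, so that $N = 2(E-F) = 2(V-2)$. Solving for $V$ yields $V = \tfrac{N}{2} + 2$, as claimed. As a consistency check one notes this matches Proposition~\ref{prop:ext-formulas}: there $F = 2V-4-S$ and $N = F + S = 2V-4 = 2(V-2)$, in agreement.

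There is essentially no serious obstacle here; the only point requiring care is the justification of $\sum_k k\,P_k = 2E$, which is exactly where the ``normal form'' and genus-$0$ closedness assumptions enter (every edge bounds precisely two faces, and there are no digonal faces so the sum starts at $k=3$). Once that is granted, the argument is a two-line algebraic manipulation combined with $\chi = 2$, and in particular $N$ is automatically even, so $V = N/2 + 2$ is an integer as required.
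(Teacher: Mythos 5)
Your proposal is correct and follows essentially the same route as the paper: both use the face--edge incidence relation $\sum_k k\,P_k = 2E$ together with $\sum_k P_k = F$ to get $N = 2E - 2F$, and then apply Euler's formula $V - E + F = 2$ to conclude $V = \tfrac{N}{2} + 2$. The only difference is cosmetic (you solve for $E-F$ first, the paper substitutes $E = (N+2F)/2$ directly into Euler's formula), so no further comparison is needed.
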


\begin{proof}
From the face–edge incidence relation,
\[
2E \;=\; \sum_{k \ge 3} k\,P_k
  \;=\; \sum_{k \ge 3} (k-2)\,P_k \;+\; 2\sum_{k \ge 3} P_k
  \;=\; N \;+\; 2F.
\]
By Euler’s formula $V - E + F = 2$, substituting $E = (N + 2F)/2$ gives
\[
V - \frac{N + 2F}{2} + F = 2
\quad\Rightarrow\quad
V - \frac{N}{2} = 2,
\]
and hence $V = \frac{N}{2} + 2$.
\end{proof}

\begin{corollary}[Integer feasibility check]\label{cor:int-check}
If $N$ is odd, then $V$ is non-integer and no valid genus-0 polyhedron in normal form with such a face–degree distribution can exist.
\end{corollary}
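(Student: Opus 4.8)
The plan is to derive this corollary as an immediate consequence of Proposition~\ref{prop:V-from-N}, treating it as a contrapositive-style observation rather than a result requiring independent machinery. Since the proposition establishes the exact identity $V = \tfrac{N}{2} + 2$ for every genus-0 polyhedron in normal form, the quantity $V$ inherits whatever arithmetic obstructions $\tfrac{N}{2}$ carries. The entire content of the corollary is the observation that $V$ must be a nonnegative integer (it counts vertices), so the right-hand side must also be an integer.

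First I would recall that $N = \sum_{k\ge 3}(k-2)P_k$ is a sum of nonnegative integers and hence itself a nonnegative integer, so the only way $\tfrac{N}{2}+2$ can fail to be an integer is if $N$ is odd. Next I would invoke Proposition~\ref{prop:V-from-N}: \emph{if} a genus-0 polyhedron in normal form existed with the given face-degree distribution $\{P_k\}$, then its vertex count would necessarily equal $\tfrac{N}{2}+2$. But with $N$ odd this value is a half-integer, contradicting the fact that $V \in \mathbb{Z}_{\ge 0}$. Therefore no such polyhedron exists, which is exactly the claimed feasibility check. I would phrase this as a one-line proof: ``If $N$ is odd then $\tfrac{N}{2}+2 \notin \mathbb{Z}$, so by Proposition~\ref{prop:V-from-N} no admissible polyhedron realizes this face-degree distribution.''

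There is essentially no obstacle here — the only subtlety worth flagging is that the statement is a \emph{necessary} condition for feasibility, not a sufficient one: $N$ being even guarantees only that the vertex-count obstruction vanishes, not that a polyhedron with that exact multiset of face degrees actually exists (one still needs 3-connectivity, planarity, and the degree-$\ge 3$ vertex conditions, which are not captured by parity of $N$ alone). I would add a brief remark to that effect so the reader does not over-read the corollary as a full realizability criterion. If a sharper version is desired, one could note that the same identity also forces $\sum_k (k-2)P_k$ and the face total $F = \sum_k P_k$ to be jointly compatible with $E = \tfrac{N}{2} + F$ being an integer, but that is automatic once $N$ is even, so the parity of $N$ is the only arithmetic constraint that Proposition~\ref{prop:V-from-N} contributes.
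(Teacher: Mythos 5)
Your proposal is correct and matches the paper's intent exactly: the corollary is presented as an immediate consequence of Proposition~\ref{prop:V-from-N}, and your contrapositive argument (odd $N$ forces $V=\tfrac{N}{2}+2\notin\mathbb{Z}$, contradicting integrality of the vertex count) is precisely the reasoning the paper relies on. Your added caveat that evenness of $N$ is necessary but not sufficient for realizability is consistent with the paper's own discussion of false positives and is a reasonable clarification, not a deviation.
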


\medskip
\noindent\emph{Remark (origin of the formula).}  
An earlier derivation produced the more elaborate rational form  
\[
V = \frac{2 M_{\mathrm{total}} - N \,(2 S_{\mathrm{total}} + 10)}{M_{\mathrm{total}} - 5 N},
\]
with 
\[
M = \sum_{k\ge 3}(k+2)P_k, 
\quad S = \sum_{k\ge 3}(k-3)P_k,
\]
arising from edge--face incidence counts and the extremal relation $E=3V-6-S$.  
Under the genus--$0$, normal--form constraints, one has $M-5N = -4S$, which causes the above expression to collapse identically to $V=\tfrac{N}{2}+2$ for all values of $S$.  
This shows that the vertex count is \emph{independent} of flatness in this setting.

\

\indent\emph{Historical derivation}

Let $P_k$ be the number of $k$--gonal faces, and define
\[
N=\sum_{k\ge 3}(k-2)P_k,\quad
M=\sum_{k\ge 3}(k+2)P_k,\quad
S=\sum_{k\ge 3}(k-3)P_k,\quad
F=\sum_{k\ge 3}P_k.
\]
From
\[
2E=\sum_{k\ge 3}kP_k,\qquad 2E=3F+S,
\]
we find 
\[
M = (2E)+2F,\quad N = (2E)-2F,
\]
and hence
\[
E=\frac{M+N}{4},\quad F=\frac{M-N}{4},\quad
M-5N = -4(2E-3F) = -4S.
\]
With $E=3V-6-S$ (Proposition~\ref{prop:ext-formulas}), we have
\[
\frac{M+N}{4} = 3V - 6 - S
\quad\Longrightarrow\quad
V=\frac{M+N+24+4S}{12}.
\]
Substituting $M=5N-4S$ yields
\[
V = \frac{N}{2} + 2.
\]
Thus the simple form follows unconditionally from the general counts.

\

\paragraph{\tmtextbf{Step 4: Face-degree excess (flatness) $S$:}}
\[
S := \sum_{k\ge 3} (k-3)\,P_k.
\]

\

\paragraph{\tmtextbf{Step 5: Total Face Count (F)}:}

\[ F = \sum_{k = 3}^n P_k \]

\

\paragraph{\tmtextbf{Step 6: Edge count (E)}:}
\[
E = 3V - 6 - S.
\]
Equivalently, after triangulation one has $F_\triangle = F+S$ and $E_\partial = E+S$,
and the triangulated incidence identity
\[
2E_\partial = 3F_\triangle
\quad\text{ i.e. }\quad
2(E+S)=3(F+S)
\]
holds.

\

\paragraph{\tmtextbf{Step 7: Euler's Formula Check}:}

\[ V - E + F = 2 \]
If $V - E + F$ does not equal to 2, then the set of polygon types does not
enclose a valid 3D volume.

\

\paragraph{\tmtextbf{Step 8: Flatness Constraint}:}

\[ S_{\mathrm{lb}} = 0 \]
\[ S_{\mathrm{ub}} = \left\{ \begin{array}{ll}
     \frac{3}{2} V - \tfrac{13}{2} & \text{if } V \text{is odd}\\
     & \\
     \frac{3}{2} V - 6 & \text{if } V \text{is even}
   \end{array} \right. \]
\[ S_{\mathrm{diff}} = S_{\mathrm{ub}} - S\]
If $S > S_{\mathrm{ub}}(V)$, the face-degree excess exceeds the parity/degree upper bound
implied by $\delta(G)\ge 3$, so no such genus-zero polyhedral graph can exist.
Loosely, one may say the proposed surface is \emph{``too flat (combinatorially) to enclose''}.

\

\paragraph{Summary of Practical Application (Appendix~D)}

The combinatorial method introduced in this paper has been applied to a range
of polygonal configurations, demonstrating its capacity to correctly validate
known polyhedra, reject over-flattened ones, and expose structurally deceptive
false positives. These worked examples include the cube, the square pyramid,
and the regular dodecahedron, each of which passes all symbolic checks and
matches known geometry. For example, the decomposition of the
cube yields five tetrahedra which aligns with a known meshing scheme producing
four right tetrahedra at its corners and a central regular tetrahedron
\cite{naff2015}. In contrast, a configuration of 36 hexagons fails the
flatness threshold, while a less intuitive case, two hexagons and four
triangles, passes all algebraic constraints but appears unrealizable in
$\mathbb{R}^3$, highlighting the symbolic
method's limits. Intriguingly, substituting two triangles with one square in
that borderline case causes the structure to fail the flatness test entirely,
revealing a sharp combinatorial threshold. Such substitutions may serve as
diagnostic probes for detecting ``brittleness'' in enclosure feasibility.
These examples reinforce the method's utility as both a filtering tool for
polyhedral candidates and a lens through which to explore the fine boundary
between combinatorial sufficiency and geometric necessity.

\section{Symmetry Between Face- and Vertex-Type Combinatorics}

This section investigates the symmetry between face types and
vertex valencies, drawing attention to a duality often noted but rarely
operationalized in enclosure logic. In analyzing polyhedra from a combinatorial
standpoint, we observe a notable parallelism between the types of faces and the
types of vertices.

\subsection{Face Types and Vertex Types}

Each face of a polyhedron is defined by the number of its edges (or
equivalently, the number of vertices around it), while each vertex is defined
by the number of edges meeting at that point. These quantities are known as:
\begin{itemize}
  \item \textbf{Face degrees}: the number of edges surrounding a face,
  \item \textbf{Vertex valencies}: the number of edges incident to a vertex.
\end{itemize}

For a polyhedron with $V$ vertices, $E$ edges, and $F$ faces, we always have
the exact incidence identity
\[
  \sum_{i = 1}^F d_i \;=\; 2 E \;=\; \sum_{j = 1}^V v_j,
\]
where $d_i$ is the degree of the $i$-th face and $v_j$ is the valency of the
$j$-th vertex. This face--vertex symmetry is emphasized in Gr{\"u}nbaum’s
treatment of polyhedral duality \cite{gruenbaum2003}, where vertex valencies
and face degrees are shown to reflect a deeper structural reciprocity across
convex forms. It reflects a basic balance: each edge is shared by two faces and
connects two vertices, hence contributes twice to the totals.

\subsection{Observed Symmetry and Interpretation}

From empirical examination and combinatorial modeling, I notice that:
\begin{itemize}
  \item The range of face types (triangle, quadrilateral, pentagon, etc.)
  mirrors the range of vertex types (valency 3, 4, 5, etc.).
  \item The minimum possible face or vertex type is generally 3 (a triangle or
  trivalent vertex), which is necessary for a polyhedron to be enclosed.
  \item The maximum number of edges in a face or incident edges at a vertex is
  bounded by the number of vertices, typically less than $V$.
\end{itemize}

This observation does not suggest a strict one-to-one mapping but highlights a
structural analogy:
\begin{itemize}
  \item When enumerating or constructing valid polyhedra, both sets of types
  must satisfy the same total:
  \[ \text{Total face degrees} = \text{Total vertex valencies} = 2 E. \]
  \item This constrains generation algorithms and supports validation
  strategies.
  \item It provides a balanced perspective when analyzing polyhedra from the
  inside (vertices) or outside (faces).
\end{itemize}
Though rooted in standard topological rules, this symmetry has not been
prominently emphasized in the literature as a guiding structure. It suggests
an intuitive, combinatorial harmony between internal and external complexity.

\subsection{Face-Type Combinatorial Constraints}

We now narrow down the valid face-type combinations in a polyhedron vertex count $V$, edge count $E$, face count $F$, and external flatness $S$.

\paragraph{Face-Type Limit Rule.}
In a polyhedral graph with $V$ vertices, any face can have at most $V-1$ edges,
since its boundary is a simple cycle and therefore cannot visit all $V$
vertices without repetition.

\paragraph{Equations for Face-Type Combinations.}
For given values of $F$, $E$, and $S$, the face-type variables must satisfy:
\[
\begin{aligned}
F_3 + F_4 + F_5 + \cdots &= F
  && \text{(total number of faces)},\\
3F_3 + 4F_4 + 5F_5 + \cdots &= 2E
  && \text{(edge--face incidence)},\\
(3{-}3)F_3 + (4{-}3)F_4 + (5{-}3)F_5 + \cdots &= S
  && \text{(external triangulation segments)}.
\end{aligned}
\]
All solutions must be non-negative integers.

\paragraph{Worked Example ($V=6$, $S=2$, $E=10$, $F=6$).}
Allowed face types: triangles, quadrilaterals, and pentagons (up to 5 sides).
Valid integer solutions:
\begin{itemize}
  \item $F_3 = 4$, $F_4 = 2$, $F_5 = 0$ \quad (4 triangles, 2 quadrilaterals),
  \item $F_3 = 5$, $F_4 = 0$, $F_5 = 1$ \quad (5 triangles, 1 pentagon).
\end{itemize}
Both satisfy:
\[
\begin{aligned}
  F_3 + F_4 + F_5 & = 6 \\
  3 F_3 + 4 F_4 + 5 F_5 & = 20 \\
  F_4 + 2 F_5 & = 2.
\end{aligned}
\]

\subsection{Vertex-Type Combinatorial Constraints}

We now narrow down the valid vertex-type combinations in a polyhedron with total vertex count $V$ and edge count $E$.

\paragraph{Vertex Type System of Equations.}
Every valid vertex-type configuration must satisfy:
\[
\begin{aligned}
     V_3 + V_4 + V_5 + \ldots & = V & \quad & \text{(vertex count)} \\
     3 V_3 + 4 V_4 + 5 V_5 + \ldots & = 2 E & \quad & \text{(edge incidence)} \\
     V_3 + 2 V_4 + 3 V_5 + \ldots & = 2E - 2V
     & \quad & \text{(since $\sum_j (\tmop{deg} v_j-2) = 2E - 2V$).}
\end{aligned}
\]
All variables must be non-negative integers, and the maximum degree allowed is
$V-1$.

\paragraph{Worked Example ($V=6$, $E=10$, $S=2$, $T=3$).}
Solutions include:
\begin{itemize}
  \item $V_3 = 4$, $V_4 = 2$, $V_5 = 0$,
  \item $V_3 = 5$, $V_4 = 0$, $V_5 = 1$.
\end{itemize}
Both satisfy all constraints.

\paragraph{Worked Example: A Likely Non-Realizable Configuration.}
Given $V=6$, $E=11$, $S=1$, $T=3$, valid solutions include:
\[
\begin{array}{cccc}
     \hline
     \text{Config} & V_3 & V_4 & V_5 \\
     \hline
     1 & 2 & 4 & 0 \\
     2 & 3 & 2 & 1 \\
     3 & 4 & 0 & 2 \\
     \hline
\end{array}
\]
Configs 1 and 2 appear realizable; Config 3 fails embedding.  
The corresponding face-type configuration is $a=6$ (triangles), $b=1$
(quadrilateral), which matches edge counts but resists geometric embedding.

\subsection{Limitations and Realizability Filters}

This illustrates a key limitation: satisfying combinatorial constraints does
not guarantee embeddability in three-dimensional space. As Gr{\"u}nbaum notes,
only graphs that are planar and 3-connected can be realized as convex 3D
polyhedra, a result due to Steinitz \cite{gruenbaum2003}.  

To reduce false positives, one may introduce heuristic filters:
\begin{itemize}
  \item Bounding the number of high-valency vertices relative to the number of
  tetrahedra,
  \item Angle feasibility checks,
  \item Limits on repeated vertex incidence across tetrahedra.
\end{itemize}
Such filters are not definitive, but they help identify configurations that
are combinatorially possible yet geometrically implausible.

\subsection{Summary}

In summary, the structural symmetry between face-type and vertex-type
combinatorics provides a unifying framework. The parallel systems of equations
show that both perspectives encode the same edge-incidence information while
imposing complementary feasibility conditions. This dual viewpoint clarifies
why some integer solutions admit realizable polyhedra while others fail,
highlighting the balance between external and internal complexity in
polyhedral structures.

\

\section{On All Likely Configurations of a Polyhedron and The Limits of
Combinatorial Realizability}

Section~8 explores a possible route for exploring all likely configurations
of a polyhedron defined by a number of vertices and derived surface flatness
by utilizing \ an unrestricted partition function $p (S) .$ As shown in
earlier sections, every valid polyhedron of $V$ vertices is characterized by
two distinct combinatorial structures:
\begin{itemize}
  \item The external structure: $(F, E, M, S)$. For a polyhedron of vertices V,
  these complements have a total number of combinations$= \left\{
  \begin{array}{ll}
    \frac{3}{2} V - \tfrac{11}{2} & \text{if } V \text{is odd}\\
    \frac{3}{2} V - 5 & \text{if } V \text{is even}
  \end{array} \right.$
  
  \item The internal structure: $(T, N_i, S_i)$. For a polyhedron of
  vertices V, these complements have a total number of combinations of $V - 4$
\end{itemize}

I now define a function that estimates a lower bound on the number of valid
combinations between these two sets for a given vertex count $V$. We denote
this conservative estimate by $\mathrm{Combinations}_{\min}(V)$. This count
reflects how many internally and externally compatible sets of structures
exist, without specifying the actual types of those structures (face-types and
vertex-types). Refining this lower bound by characterizing the persistence of
internal multiplicity as a function of $S$ remains an open combinatorial
problem.

\subsection{Conservative Estimate for Structure Pair Combinations}

\[
\mathrm{Combinations}_{\mathrm{lb}}(V)=
\begin{cases}
\frac{1}{2}V^2 - 3V + 5, & \text{if $V$ is even},\\[4pt]
\frac{1}{2}V^2 - 3V + \tfrac{9}{2}, & \text{if $V$ is odd}.
\end{cases}
\]

\paragraph{Derivation.}
Fix a vertex count $V\ge 4$. The external flatness parameter $S$ ranges over
\[
S=0,1,\dots,S_{\mathrm{ub}}(V),
\qquad
S_{\mathrm{ub}}(V)=3V-6-\left\lceil \frac{3V}{2}\right\rceil
=
\begin{cases}
\frac{3V}{2}-6,& V\ \text{even},\\[2pt]
\frac{3V}{2}-\frac{13}{2},& V\ \text{odd}.
\end{cases}
\]
For each fixed $S$, the internal tetrahedral count is constrained by the
internal ladder bounds
\[
T_{\min}(V)=V-3
\qquad\text{and}\qquad
T_{\max}(V,S)=2V-8-S.
\]
Thus, at the level of global internal invariants, the number of admissible
integer values of $T$ compatible with a given $S$ is at most
\[
T_{\max}(V,S)-T_{\min}(V)+1
=
\bigl(2V-8-S\bigr)-(V-3)+1
=
V-4-S.
\]
In particular, as $S$ increases, the admissible $T$-range shrinks linearly.
At $S=0$ (fully triangulated boundary), there exist polyhedra that realize the
full ladder of internal invariant values $(T,N_i,S_i)$, with the
bipyramid--prism family attaining the maximal value $T_{\max}$.

To obtain a conservative macro-level pairing count, we require only that at
least one internally compatible invariant triple exists for each admissible
value of $S$, even after the $T$-interval collapses. Writing this as a capped
count gives
\[
N_{\mathrm{int}}(V\mid S)
=
\max\{\,V-4-S,\;1\,\}
=
\max\{\,2V-8-S-(V-3)+1,\;1\,\}.
\]
Therefore the total number of compatible external--internal macro-pairs is
\[
\sum_{S=0}^{S_{\mathrm{ub}}} N_{\mathrm{int}}(V\mid S)
=
\sum_{S=0}^{V-5}(V-4-S)
+
\sum_{S=V-4}^{S_{\mathrm{ub}}} 1.
\]
The first term is a triangular sum,
\[
\sum_{S=0}^{V-5}(V-4-S)
=
(V-4)+(V-5)+\cdots+1
=
\frac{(V-4)(V-3)}{2},
\]
and the second term contributes a linear tail of length
$S_{\mathrm{ub}}-(V-4)+1$.
Substituting the parity form of $S_{\mathrm{ub}}$ and simplifying yields
\[
\mathrm{Combinations}_{\mathrm{lb}}(V)=
\begin{cases}
\frac{1}{2}V^2-3V+5, & V \text{ even},\\[4pt]
\frac{1}{2}V^2-3V+\tfrac{9}{2}, & V \text{ odd}.
\end{cases}
\]
Here the quadratic leading term arises from summing the linearly shrinking
internal ladder across $S$, while any multiplicity of realizations for a given
$(V,S,T)$ affects only lower-order corrections beyond this macro-invariant
count.

It is worth emphasizing again that this estimate is intentionally conservative.
As illustrated in Table~\ref{tab:V8} for $V=8$, multiple distinct internal
configurations may persist even after the nominal $T$-interval has ceased
to shrink linearly with $S$. Such persistence increases the true number of
compatible structure pairs beyond the value predicted by
$\text{Combinations}(V)$ and indicates that the present formula provides a
lower bound rather than an exact count.

\paragraph{Remark (tail persistence and decomposition pathways).}
Table~\ref{tab:V8} shows that the ``tail'' regime in $S$ need not collapse to a
single internal invariant triple. For example, at $V=8$ the cube admits a
$T=5$ decomposition obtainable by successive ``shaving'' moves, but also admits
$T=6$ decompositions obtained by first partitioning the volume into two
hexahedral blocks and then tetrahedralizing each block (three tetrahedra per
block). Both outcomes satisfy the same global SALT+MIE constraints, but they
differ in $(T,N_i,S_i)$, indicating that the multiplicity of internal states at
large $S$ is a genuine structural feature rather than noise.
Whether this tail multiplicity grows with $V$ (thickening tails) or collapses
as $S\to S_{\mathrm{ub}}(V)$ is left to future enumeration and experimentation.

Numerical values of $\mathrm{Combinations}_{\mathrm{lb}}(V)$ for $4 \le V \le 20$,
together with a graphical illustration of its quadratic growth, are given in
Appendix~E. Notably, this lower-bound count grows only polynomially in $V$,
whereas the number of admissible face-type configurations grows
combinatorially, as it is controlled by the cumulative sum of integer partition
counts $\sum_{S=0}^{S_{\mathrm{ub}}} p(S)$, as elaborated in the following section.

When attempting to uncover the actual face-type and vertex-type configurations
corresponding to a specific polyhedron defined by vertex count $V$,
triangulation segment count $S$, and tetrahedron count $T$, the procedures
described in Section~7 should be applied.
These methods allow one to solve for specific polygonal face arrangements and
converge to vertex degree structures that are compatible with a given global
combinatorial configuration.

It is important to emphasize that satisfying all combinatorial constraints
does not necessarily guarantee spatial realizability of all proposed
configurations. This limitation is most apparent in the vertex-type integer
combinations but may extend to other aspects of the framework as well. Some
solutions that are valid in integer space may still be non-embeddable in
three-dimensional geometry due to overlap or topological obstruction.
Realizability remains a geometric problem. One that combinatorics alone cannot
always resolve.

\subsection{Estimating the Upper Bound from Flatness and Face Combinations}

While previous sections emphasize compatibility between internal and external
combinatorics, it is also useful to estimate an upper bound of possible
polyhedral configurations based solely on the total surface angle budget. In
contrast to a full combinatorial analysis of vertex and edge constraints, this
method estimates the number of ways one can arrange polygonal faces on a
polyhedral surface using only the flatness range of the surface. It does not
take into account local realizability constraints, such as vertex incidence or
internal fillability by tetrahedra. As it doesn't account for vertex types, it
serves as a partial initial upper-bound estimate of surface complexity.

\subsection{Method Outline}

\tmtextbf{Step 1: Compute the flatness range}

Define the triangulation segment budget $S$, where $S = 0$ corresponds to a
surface composed entirely of triangles (i.e., zero extra flatness beyond
minimal curvature), and larger values of $S$ correspond to the inclusion of
increasingly flat faces (e.g., quadrilaterals, pentagons). The minimum is
always $S_{\mathrm{lb}} = 0$, corresponding to the fully triangulated case. The
maximum possible number of triangulation-equivalent segments $S_{\mathrm{ub}}$
depends on the number of vertices $V$ and follows the previously mentioned
parity--sharp linear functions:
\[ S_{\mathrm{ub}} = \left\{ \begin{array}{ll}
     \frac{3}{2} V - \tfrac{13}{2} & \text{if } V \text{is odd}\\
     \frac{3}{2} V - 6 & \text{if } V \text{is even}
   \end{array} \right. \]
\tmtextbf{Step 2: Enumerate face-type combinations using integer partitions}

Each valid configuration of polygonal face types can be expressed as an
integer partition of $S$, where each polygon type contributes a fixed number
of ``excess segments'' relative to a triangle:
\begin{itemize}
  \item A quadrilateral contributes 1,
  
  \item A pentagon contributes 2,
  
  \item And so on.
\end{itemize}
Thus, every value of $S \in [0, S_{\mathrm{ub}}]$ corresponds to a number of integer
partitions under these constraints. The total number of valid polygonal
configurations at a given $V$ is then equal to the sum of these partition
counts from $S_{\mathrm{lb}}$ to $S_{\mathrm{ub}}$. If no restriction is placed on the
number of polygon sides, then this count becomes equivalent to the
unrestricted partition function $p (S)$, a classical object in number theory
that enumerates the number of ways to write $S$ as a sum of positive integers
(ignoring order) \cite{andrews1976}. The partition function begins:
\[ p (S) = \{1, 1, 2, 3, 5, 7, 11, 15, 22, 30, 42, \ldots\} \]
Therefore, for a given vertex count $V$, the number of distinct face-type
combinations is:
\[ \sum_{S = 0}^{S_{\mathrm{ub}}} p (S) \]
For instance, for $V = 5$, I compute $S_{\mathrm{ub}} = 1$, and the number of valid
configurations is:
\[ p (0) + p (1) = 1 + 1 = 2 \]

\paragraph{\tmtextbf{Step 3: Compute total number of face-type combinations}}

Sum all partition counts (unrestricted) across the flatness range. The
resulting total gives an upper bound on the number of distinguishable surface
configurations for a polyhedron with $V$ vertices, assuming only the flatness
constraint.

\begin{table}[H]
\centering
\caption{Lower and upper bounds on total external flatness $S$ and the resulting
upper bound on admissible face-type combinations, as a function of vertex count
$V$, under the symbolic enclosure constraints.}
\label{tab:face_type_upperbounds}
\setlength{\tabcolsep}{7pt}
\renewcommand{\arraystretch}{0.9}
\begin{tabular}{cccc}
\toprule
$V$ & $S_{\mathrm{lb}}$ & $S_{\mathrm{ub}}$ & Face-type combinations (upper bound) \\
\midrule
4  & 0 & 0  & 1    \\
5  & 0 & 1  & 2    \\
6  & 0 & 3  & 7    \\
7  & 0 & 4  & 12   \\
8  & 0 & 6  & 30   \\
9  & 0 & 7  & 45   \\
10 & 0 & 9  & 97   \\
11 & 0 & 10 & 139  \\
12 & 0 & 12 & 272  \\
13 & 0 & 13 & 373  \\
14 & 0 & 15 & 684  \\
15 & 0 & 16 & 915  \\
16 & 0 & 18 & 1597 \\
17 & 0 & 19 & 2087 \\
18 & 0 & 21 & 3506 \\
19 & 0 & 22 & 4508 \\
20 & 0 & 24 & 7338 \\
\bottomrule
\end{tabular}
\end{table}

\begin{figure}[H]
  \centering
  \includegraphics[width=9.44465761511216cm,height=5.92468516332153cm]{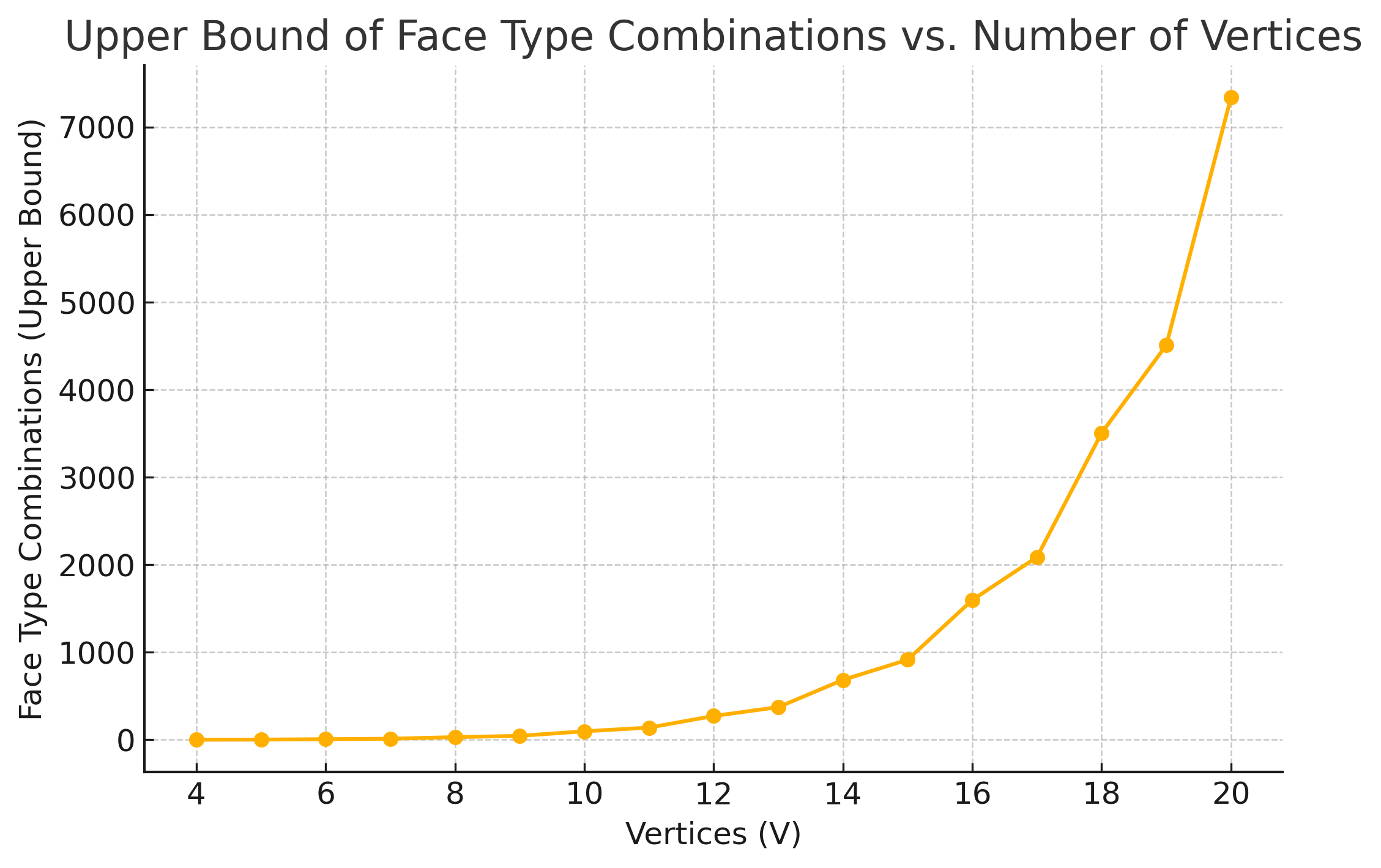}
  \caption{Upper bound of face type combinations as a function of vertex count.}
\end{figure}

This curve provides an upper bound on the number of surface configurations
derivable from face configurations for each vertex count $V$, assuming only
the flatness constraint is satisfied. It does not account for limitations
imposed by vertex-type configurations, or tetrahedral fillability. This is
demonstrated in {\tmem{Appendix~A}}, as different vertex-type configurations
can exist for the same face configuration. Nonetheless, it offers a fast and
interpretable way to explore surface complexity based solely on flatness and
consequent face composition.

After determining the number of necessary triangles to achieve closure at a
desired vertex count, each potential face-type configuration generated via
this enumeration can then be filtered or corrected by the stricter
combinatorial criteria, such as the one outlined in Section~6 ("A
Combinatorial Method for Testing 3D Polygonal Enclosure and Polyhedral
Feasibility"). Accordingly, adjustments, such as the addition or removal of
triangular faces, can be made to align each configuration with more realizable
polyhedral enclosures.

\paragraph{\tmtextbf{Step 4: Refine with restricted partitions}}

If a maximum polygon side count $k_{\max}$ is imposed (such as limiting the
use of hexagons or excluding higher-order polygons) the enumeration becomes a
case of \tmtextit{restricted integer partitions}, where only parts of size
$\leq k_{\max} - 3$ are allowed. This constraint reflects that each
non-triangular face contributes a specific number of excess triangulation
segments. Restricted partitions offer a more realistic estimate by excluding
configurations that rely on geometrically unlikely face types. For example,
limiting the count of hexagons in some configurations prevents overcounting
face combinations that cannot physically tile a polyhedral surface. By summing
the number of valid restricted partitions for each value of $S = 0$ to
$S_{\mathrm{ub}}$, I obtain a tighter upper bound on the number of feasible
face-type combinations, filtering out many false positives that are
combinatorially valid but geometrically unconstructible.

\section{Discussion: Internal Validity, External Limits, and Future Directions}

This study has introduced a symbolic, coordinate-free framework for assessing
the enclosability and internal structure of 3D polyhedra. By relying solely on
integer arithmetic, triangle counts, and combinatorial rules, the method
offers a new approach to reasoning about polyhedral structure without
invoking geometric embedding or spatial coordinates.

While prior work has shown the computational difficulty of minimizing tetrahedra
in convex polyhedra \cite{below2004complexity}, the present approach, rather than aiming
for minimality, characterizes the symbolic and combinatorial constraints required for enclosure
and internal decomposition. A central strength of the framework
presented in this study is that it leads naturally to Euler's characteristic,
one of the foundational invariants in polyhedral topology. The identity
\[ V - E + F = 2 \]
emerges not by assumption, but as a direct consequence of empirically derived
symbolic constraints involving internal triangulation segments, gluing surfaces, and
tetrahedral decompositions. The extended form
\[ V - E + F = 2 (T - N_i + S_i) \]
provides a deeper lens into the interaction between internal and external
structure. Based on the empirical formulas presented in this document, it was
shown that for valid genus-zero decompositions,
\[ T - N_i + S_i = 1 \]
which guarantees the recovery of the classical Euler result. The fact that a
purely combinatorial, coordinate-free framework yields this topological
invariant is not incidental; it reflects the framework's internal consistency
and structural legitimacy. Unlike abstract proofs that assume closure, this
derivation is constructive: Euler's characteristic {\tmem{emerges}} from
accounting for internal tetrahedra and their gluing segments and shared
triangular surfaces.

That said, while this model enforces strict symbolic coherence, it does not
guarantee geometric embeddability in three-dimensional Euclidean space.
Certain configurations, especially those near the maximal ``flatness''
threshold (i.e., where the number of triangulation segments approaches its
upper bound for polyhedron of vertices $V$), may pass all combinatorial checks
but still fail to realize as actual 3D polyhedra. These are symbolic false
positives: structurally valid according to the framework, but geometrically
unrealizable, due to overlap, crowding, or angular conflict.

This is not a weakness, but rather a natural boundary of combinatorial
reasoning. It marks the limits between {\tmem{combinatorial sufficiency}} and
{\tmem{geometric necessity}}. By identifying such edge cases, the framework
defines the outer envelope of feasible structures and offers a direction for
further refinement.

Future versions of the method may incorporate geometric heuristics, such as
convexity filters, vertex angle bounds, or planarity constraints, to help
exclude non-realizable cases. Computational embedding algorithms or
realizability tests (based on graph-theoretic results like Steinitz's theorem)
could complement the current symbolic system, bridging toward a hybrid
topological--geometric model.

Additionally, and as previously mentioned, the method is currently limited to
genus-zero polyhedra, and does not yet account for higher-genus or
non-manifold forms. Internal decompositions assume non-intersecting tetrahedra
formed strictly through external and internal triangulation segments, without
edge bisections or face subdivisions. Despite these constraints, this exploratory framework
presents a symbolic heuristic for reasoning about polyhedral structure.
It offers a consistent, transparent method for exploring enclosure, triangulation,
and internal volume using nothing but integer logic. In doing so, it could open a pathway
toward new classifications of polyhedra, and perhaps even toward insights into deeper
rules that govern 3D structure.

\section*{Code availability}

A reference implementation of the symbolic enclosure framework developed in this paper
is provided in the R package \texttt{polyenclose}.
The package implements the symbolic feasibility checks, external and internal bounds, and
face-type enumeration procedures described throughout the paper.
It computes \emph{necessary combinatorial conditions only} and does not attempt
geometric realization or construction.

The code is available at:
\[
\texttt{https://github.com/MoustaphaItani/polyenclose}
\]

\section{Acknowledgments}
The author has no formal mathematical training and, despite seeking guidance over fifteen years ago,
did not receive the mentorship or institutional support needed to develop these ideas at the time.
Mental health difficulties further complicated the ability to engage in collaborative research across unfamiliar disciplines,
and financial constraints prevented access to supplementary education.
The mathematical concepts, results, and proofs presented in this paper originate entirely from the author’s independent observations, drawings,
and combinatorial reasoning, developed between his early teenage years and early twenties.

The author thanks Mr. Mohammad Chalhoub for providing programming support to produce the table of restricted integer partitions (prior to the author realizing the sequence had long been established), and Mr. Amr Tamimi for introducing him in earlier years to Excel, both contributions made many years ago. More recently, the author thanks Mr. Tiago Monteiro-Henriques and Ms. Hajar Assaad for briefly reviewing the manuscript, and Ms. Federica Manca and Ms. Anu Virtanen for assistance with illustrations.

\bibliographystyle{plainnat}
\bibliography{references}
\end{document}


\begin{center}
{\Large Supplementary Material for}\\[2pt]
{\Large \emph{Symbolic Constraints in Polyhedral Enclosure and Tetrahedral Decomposition in Genus-0 Polyhedra}}\\[6pt]
Moustapha Itani
\end{center}

\medskip
This document contains appendices and worked examples referenced in the main text.

\appendix

\section{Empirical Worked Examples ($V=4$--$7$)}
\label{supp:empirical}

All results in this appendix are empirical examples obtained by manual
construction. No general existence or completeness claim is made.

\begin{itemize}
  \item $N$: total surface angle units ($180^{\circ}$ each)
  \item $S$: external triangulation segments
  \item $F$: faces; $E$: edges
\item Face counts: $F_k$ denotes the number of $k$-gonal faces
      ($F_3$ triangles, $F_4$ quadrilaterals, $F_5$ pentagons, $F_6$ hexagons)
\item Vertex valencies: $V_k$ denotes the number of vertices of degree $k$
      ($V_3$ through $V_6$)
  \item $T$: tetrahedra; $N_i$: internal gluing triangles; $S_i$: internal triangulation segments
\end{itemize}

\begin{table}[H]
\centering
\caption{Empirical worked configurations for polyhedra with $V=4$ to $V=7$.
Each row represents a manually constructed example. The final two columns
verify Euler's characteristic and the internal consistency relation
$T - N_i + S_i = 1$.}
\label{tab:appendix_empirical_V4to7}
\setlength{\tabcolsep}{5pt}
\renewcommand{\arraystretch}{0.85}
\begin{tabular}{
  c c
  c c c
  c c c c
  c c c c
  c c c
  c c
}
\toprule
$V$ & $N$ &
$S$ & $F$ & $E$ &
$F_3$ & $F_4$ & $F_5$ & $F_6$ &
$V_3$ & $V_4$ & $V_5$ & $V_6$ &
$T$ & $N_i$ & $S_i$ &
$V{-}E{+}F$ & $T{-}N_i{+}S_i$ \\
\midrule
4 & 4 &
0 & 4 & 6 &
4 & -- & -- & -- &
4 & -- & -- & -- &
1 & 0 & 0 &
2 & 1 \\
\midrule
5 & 6 &
0 & 6 & 9 &
6 & 0 & -- & -- &
2 & 3 & -- & -- &
2 & 1 & 0 &
2 & 1 \\
  &   &
1 & 5 & 8 &
4 & 1 & -- & -- &
4 & 1 & -- & -- &
2 & 1 & 0 &
2 & 1 \\
\midrule
6 & 8 &
0 & 8 & 12 &
8 & 0 & 0 & -- &
2 & 2 & 2 & -- &
3 & 2 & 0 &
2 & 1 \\
  &   &
0 & 8 & 12 &
8 & 0 & 0 & -- &
0 & 6 & 0 & -- &
4 & 4 & 1 &
2 & 1 \\
  &   &
1 & 7 & 11 &
6 & 1 & 0 & -- &
2 & 4 & 0 & -- &
3 & 2 & 0 &
2 & 1 \\
  &   &
1 & 7 & 11 &
6 & 1 & 0 & -- &
3 & 2 & 1 & -- &
3 & 2 & 0 &
2 & 1 \\
  &   &
2 & 6 & 10 &
4 & 2 & 0 & -- &
4 & 2 & 0 & -- &
3 & 2 & 0 &
2 & 1 \\
  &   &
2 & 6 & 10 &
5 & 0 & 1 & -- &
5 & 0 & 1 & -- &
3 & 2 & 0 &
2 & 1 \\
  &   &
3 & 5 & 9 &
2 & 3 & 0 & -- &
6 & 0 & 0 & -- &
3 & 2 & 0 &
2 & 1 \\
\midrule
7 & 10 &
0 & 10 & 15 &
10 & 0 & 0 & 0 &
2 & 3 & 0 & 2 &
4 & 3 & 0 &
2 & 1 \\
  &    &
0 & 10 & 15 &
10 & 0 & 0 & 0 &
2 & 2 & 2 & 1 &
4 & 3 & 0 &
2 & 1 \\
  &    &
0 & 10 & 15 &
10 & 0 & 0 & 0 &
1 & 3 & 3 & 0 &
5 & 5 & 1 &
2 & 1 \\
  &    &
0 & 10 & 15 &
10 & 0 & 0 & 0 &
0 & 5 & 2 & 0 &
6 & 7 & 2 &
2 & 1 \\
  &    &
1 & 9 & 14 &
8 & 1 & 0 & 0 &
3 & 2 & 1 & 1 &
4 & 3 & 0 &
2 & 1 \\
  &    &
1 & 9 & 14 &
8 & 1 & 0 & 0 &
3 & 1 & 3 & 0 &
4 & 3 & 0 &
2 & 1 \\
  &    &
1 & 9 & 14 &
8 & 1 & 0 & 0 &
2 & 3 & 2 & 0 &
4 & 3 & 0 &
2 & 1 \\
  &    &
1 & 9 & 14 &
8 & 1 & 0 & 0 &
1 & 5 & 1 & 0 &
5 & 5 & 1 &
2 & 1 \\
  &    &
2 & 8 & 13 &
7 & 0 & 1 & 0 &
3 & 3 & 1 & 0 &
4 & 3 & 0 &
2 & 1 \\
  &    &
2 & 8 & 13 &
6 & 2 & 0 & 0 &
3 & 3 & 1 & 0 &
4 & 3 & 0 &
2 & 1 \\
  &    &
3 & 7 & 12 &
6 & 0 & 0 & 1 &
6 & 0 & 0 & 1 &
4 & 3 & 0 &
2 & 1 \\
  &    &
3 & 7 & 12 &
5 & 1 & 1 & 0 &
4 & 3 & 0 & 0 &
4 & 3 & 0 &
2 & 1 \\
  &    &
3 & 7 & 12 &
4 & 3 & 0 & 0 &
4 & 3 & 0 & 0 &
4 & 3 & 0 &
2 & 1 \\
  &    &
4 & 6 & 11 &
3 & 2 & 1 & 0 &
6 & 1 & 0 & 0 &
4 & 3 & 0 &
2 & 1 \\

\bottomrule
\end{tabular}
\end{table}

No claim is made that the listed configurations exhaust all admissible cases
for the given vertex counts.

\section{Topological Comparison of Tetrahedralization Strategies and Breakdown of the Proposed Identity with Internal Steiner Points}
\label{supp:strategies}

To evaluate how different tetrahedralization strategies affect the internal structure of a cube,
I analyze key topological quantities including the number of tetrahedra \(T\),
internal gluing triangle faces \(N_i\), and internal triangulation segments \(S_i\). The proposed identity
\[
T - N_i + S_i = 1
\]
is observed to hold consistently across minimal and marching tetrahedra configurations,
provided no internal vertices are added. However, once an internal Steiner point is introduced,
particularly when it is treated as an internal vertex connected to new edges or faces,
this relation fails. The following table presents a comparative breakdown of several configurations,
highlighting how modeling assumptions (e.g., counting internal segments as edges or new pyramid faces as external) affect topological consistency.
This analysis reinforces the significance of preserving external-only structure in minimal decompositions
and offers a reference for evaluating topological soundness in more complex triangulations.

\begin{table}[H]
\centering
\caption{Topological comparison of tetrahedralization strategies for a cube.
The identity $T - N_i + S_i = 1$ holds for decompositions without internal
vertices, but fails once an internal Steiner point is introduced under various
counting conventions.}
\label{tab:appendix_steiner_comparison}
\setlength{\tabcolsep}{6pt}
\renewcommand{\arraystretch}{0.9}
\begin{tabular}{lcccccccccc}
\toprule
Configuration
& $V$ & $E$ & $F$
& $T$ & $N_i$ & $S_i$
& $E{-}F$
& $2T{-}N_i$
& $V{-}E{+}F$
& $T{-}N_i{+}S_i$ \\
\midrule
Minimal
& 8 & 12 & 6
& 5 & 4 & 0
& 6 & 6 & 2 & 1 \\
Marching tetrahedralization
& 8 & 12 & 6
& 6 & 6 & 1
& 6 & 6 & 2 & 1 \\
Steiner not counted as vertex
& 8 & 12 & 6
& 12 & 18 & 0
& 6 & 6 & 2 & $-6$ \\
Steiner counted as vertex
& 9 & 12 & 6
& 12 & 18 & 0
& 6 & 6 & 3 & $-6$ \\
Steiner counted + edges
& 9 & 20 & 6
& 12 & 18 & 0
& 14 & 6 & $-5$ & $-6$ \\
Steiner counted + edges + faces
& 9 & 20 & 18
& 12 & 6 & 0
& 2 & 18 & 7 & 6 \\
\bottomrule
\end{tabular}
\end{table}

This comparison demonstrates that the proposed identity is robust when internal structure arises solely from external constraints.
In the minimal 5-tetrahedra decomposition, the identity holds exactly with no added internal triangulation segments.
In the marching tetrahedralization strategy, although the identity still holds, the cube is bisected into two hexahedra by introducing an internal quadrilateral face.
This increases internal complexity: one additional tetrahedron is required (\(T\) increases by 1), along with two new internal gluing triangles (\(N_i\))
and one internal triangulation segment (\(S_i\)). Nonetheless, the identity is preserved due to the balance between these added quantities.
However, once a Steiner point is introduced inside the volume and treated as a vertex connected by new internal edges or faces, the identity fails.
This breakdown reflects a fundamental shift in the topological model: the internal structure is no longer a consequence of surface triangulation alone,
but instead includes additional volumetric elements. These results emphasize the importance of distinguishing between surface-conforming decompositions
and those requiring internal augmentation when analyzing or designing tetrahedral meshes.

 \medskip
\noindent\textbf{Interpretation.}
In the rows where interior faces/edges are counted together with boundary faces/edges,
the quantity $V-E+F$ is no longer the Euler characteristic of a $2$--cell embedding of a closed surface.
Rather, it is a mixed bookkeeping statistic for a $3$--dimensional cell complex.
Accordingly, values such as $9-20+18=7$ do \emph{not} indicate a failure of Euler's formula;
they indicate that the counted $(V,E,F)$ do not describe the boundary sphere alone.
For the boundary $2$--complex (faces only on $\partial$), one still has $V-E+F=2$.
Now, compare with Appendix C for a case where invariance is preserved under surface-Steiner points.

\section{Invariance of the Proposed Identity for Polyhedra with Surface-Embedded Edge Steiner Points}
\label{supp:invariance}

To illustrate the invariance of the identity $T - N_i + S_i = 1$ under different
tetrahedral decompositions, including those involving Steiner points, consider
a single tetrahedron with one Steiner point placed on each of three edges,
each edge chosen such that the three Steiner points lie \ on the same face.

{\tmstrong{Decomposition A (Plane Cut Through Three Steiner Points):}}\\
Consider the plane passing through the three Steiner points. This plane
divides the tetrahedron into two polyhedral regions:
\begin{itemize}
  \item One region is a tetrahedron.
  
  \item The other is a five-faced polyhedron bounded by three quadrilateral
  faces and two triangular faces, with six vertices in total.
\end{itemize}
This five-faced polyhedron can be further decomposed into {\tmstrong{three
tetrahedra}}, yielding a total of {\tmstrong{four tetrahedra}}. The tetrahedra
are glued pairwise along {\tmstrong{three internal triangular faces}}, and the
only Steiner points used are the original three placed on the edges.

Thus, the identity is satisfied:
\[ T = 4, N_i = 3, S_i = 0 \Rightarrow T - N_i + S_i = 4 - 3 + 0 = 1 \]

{\tmstrong{Decomposition B (Edge-Based Slicing Through Steiner Points):}}\\
Alternatively, one can sequentially slice the tetrahedron through one Steiner
point at a time, joining each to the two original vertices that define its
edge, and to one of the remaining original vertices, forming tetrahedra that
share triangular faces. This approach also produces {\tmstrong{four
tetrahedra}}, glued across the same number of internal triangular interfaces
({\tmstrong{three}}), and using the same {\tmstrong{three Steiner points}}.

Again, the identity holds:
\[ T - N_i + S_i = 4 - 3 + 0 = 1 \]
This example shows that {\tmstrong{despite the different topology of the
intermediate polyhedral shapes}}, the identity is preserved, emphasizing its
{\tmstrong{combinatorial invariance}} across Steiner-supported decompositions.

\section{Worked Examples Demonstrating the Combinatorial Method
for Testing 3D Polygonal Enclosure and Feasibility}
\label{supp:examples}

This appendix presents worked examples that demonstrate the combinatorial
method. It includes three examples that are combinatorially and geometrically
valid, one example that is too flat to enclose, and an example of a likely
false positive, a case where the set of polygons is determined symbolically
valid but likely unrealizable.

\subsection*{Worked Example: Four Triangles and One Quadrilateral (Square
Pyramid)}

\begin{itemize}
  \item Step 1: Polygons: 1 square ($P_4 = 1$), 4 triangles ($P_3 = 4$)
  \item Step 2: $N = 6$
  \item Step 3: $V = \frac{6}{2} + 2 = 5$
  \item Step 4: $S = 1$
  \item Step 5: $F = 5$
  \item Step 6: $E = 3 \cdot (5 - 2) - 1 = 8$
  \item Step 7: Euler check: $5 - 8 + 5 = 2$
  \item Step 8: $S_{\mathrm{ub}} = \frac{3}{2} \cdot 5 - \tfrac{13}{2} = 2$;
  $S_{\mathrm{difference}} = 1$ (passes flatness test)
\end{itemize}

\subsection*{Worked Example: Six Squares (The Cube)}

\begin{itemize}
  \item Step 1: Polygons: 6 squares or quadrilaterals ($P_4 = 6$)
  \item Step 2: $N = (4 - 2) \cdot 6 = 12$
  \item Step 3: $V = \frac{12}{2} + 2 = 8$
  \item Step 4: $S = (4 - 3) \cdot 6 = 6$
  \item Step 5: $F = 6$
  \item Step 6: $E = 3 \cdot (8 - 2) - 6 = 12$
  \item Step 7: Euler check: $8 - 12 + 6 = 2$
  \item Step 8: $S_{\mathrm{ub}} = \frac{3}{2} \cdot 8 - 6 = 6$;
  $S_{\mathrm{difference}} = 0$ (passes flatness test)
\end{itemize}

\subsection*{Worked Example: 12 Pentagons (The Dodecahedron)}

\begin{itemize}
  \item Step 1: Polygons: 12 pentagons ($P_5 = 12$)
  \item Step 2: $N = (5 - 2) \cdot 12 = 36$
  \item Step 3: $V = \frac{36}{2} + 2 = 20$
  \item Step 4: $S = (5 - 3) \cdot 12 = 24$
  \item Step 5: $F = 12$
  \item Step 6: $E = 3 \cdot (20 - 2) - 24 = 30$
  \item Step 7: Euler check: $20 - 30 + 12 = 2$
  \item Step 8: $S_{\mathrm{ub}} = \frac{3}{2} \cdot 20 - 6 = 24$;
  $S_{\mathrm{difference}} = 0$ (passes flatness test)
\end{itemize}

\subsection*{Worked Example: 36 Hexagons}

\begin{itemize}
  \item Step 1: Polygons: 36 hexagons ($P_6 = 36$)
  \item Step 2: $N = 144$
  \item Step 3: $V = \frac{144}{2} + 2 = 74$
  \item Step 4: $S = 108$
  \item Step 5: $F = 36$
  \item Step 6: $E = 3 \cdot (74 - 2) - 108 = 108$
  \item Step 7: Euler check: $74 - 108 + 36 = 2$
  \item Step 8: $S_{\mathrm{ub}} = \frac{3}{2} \cdot 74 - 6 = 105$;
  $S_{\mathrm{difference}} = -3$ (fails flatness test)
\end{itemize}
\begin{tmindent}
  \textbf{Conclusion: }A set of 36 hexagons cannot produce an enclosed
  polyhedron.
\end{tmindent}

\subsection*{Worked Example: Two Hexagons and Four Triangles (False Positive)}

\begin{itemize}
  \item \textbf{Step 1:} Polygons: 2 hexagons ($P_6 = 2$), 4 triangles ($P_3 = 4$)
  \item \textbf{Step 2:} $N = (6 - 2) \cdot 2 + (3 - 2) \cdot 4 = 8 + 4 = 12$
  \item \textbf{Step 3:} $V = \frac{12}{2} + 2 = 8$
  \item \textbf{Step 4:} $S = (6 - 3) \cdot 2 + (3 - 3) \cdot 4 = 6 + 0 = 6$
  \item \textbf{Step 5:} $F = 6$
  \item \textbf{Step 6:} $E = 3 \cdot (8 - 2) - 6 = 12$
  \item \textbf{Step 7:} Euler check: $V - E + F = 8 - 12 + 6 = 2$ (passes)
  \item \textbf{Step 8:} $S_{\mathrm{ub}} = \frac{3}{2} \cdot 8 - 6 = 6$,
  $S_{\mathrm{difference}} = 0$ (passes flatness)
\end{itemize}
\begin{tmindent}
  \textbf{Conclusion:} This configuration passes all combinatorial checks:
  Euler's formula holds, edge and face counts are consistent, the
  triangulation segment count meets the flatness threshold, and a minimal
  internal tetrahedral decomposition is arithmetically feasible. However, no
  known convex or non-convex polyhedron with exactly two hexagons and four
  triangles exists in the literature. The structure cannot be embedded in
  $\mathbb{R}^3$ as a genus-zero surface with non-self-intersecting faces.
  This mismatch illustrates the boundary between symbolic sufficiency and
  geometric feasibility. While the framework successfully captures algebraic
  consistency, it cannot, by design, account for spatial constraints such as
  embeddability or convex closure. This example thus serves as a clear case
  where realizability remains unresolved. No known convex or non-convex realization exists.

  When substituting two of the triangles with one quadrilateral, keeping the
  total face count at six, the resulting configuration becomes {\em too
  flat} to enclose. This failure highlights how even minimal changes in
  polygonal composition can push a structure past the limits of combinatorial
  enclosure. Such sensitivity underscores the \textbf{brittleness} of symbolic
  tetrahedral enclosure near flatness thresholds. The substitution of two
  triangles with a quadrilateral may therefore serve as a practical
  diagnostic for testing the \textbf{combinatorial resilience} of enclosure
  candidates. Identifying and interpreting such tipping points offers a
  pathway for refining this framework and integrating embeddability heuristics
  in future work.
\end{tmindent}

\section{Numerical values of $\mathrm{Combinations}_{\mathrm{lb}}(V)$}
\label{supp:combinations}

\begin{table}[H]
\centering
\caption{Lower-bound estimates of the number of admissible external--internal
macro-configuration pairs, $\mathrm{Combinations}_{\mathrm{lb}}(V)$, as a function of vertex count $V$.}
\label{tab:V4to20_counts}
\setlength{\tabcolsep}{7pt}
\renewcommand{\arraystretch}{0.9}
\begin{tabular}{cc}
\toprule
$V$ & Estimated total configuration sets \\
\midrule
4  & 1  \\
5  & 2  \\
6  & 5  \\
7  & 8  \\
8  & 13 \\
9  & 18 \\
10 & 25 \\
11 & 32 \\
12 & 41 \\
13 & 50 \\
14 & 61 \\
15 & 72 \\
16 & 85 \\
17 & 98 \\
18 & 113\\
19 & 128\\
20 & 145\\
\bottomrule
\end{tabular}
\end{table}

\begin{figure}[H]
  \centering
  \includegraphics[width=11.2917650531287cm,height=6.27320280729372cm]{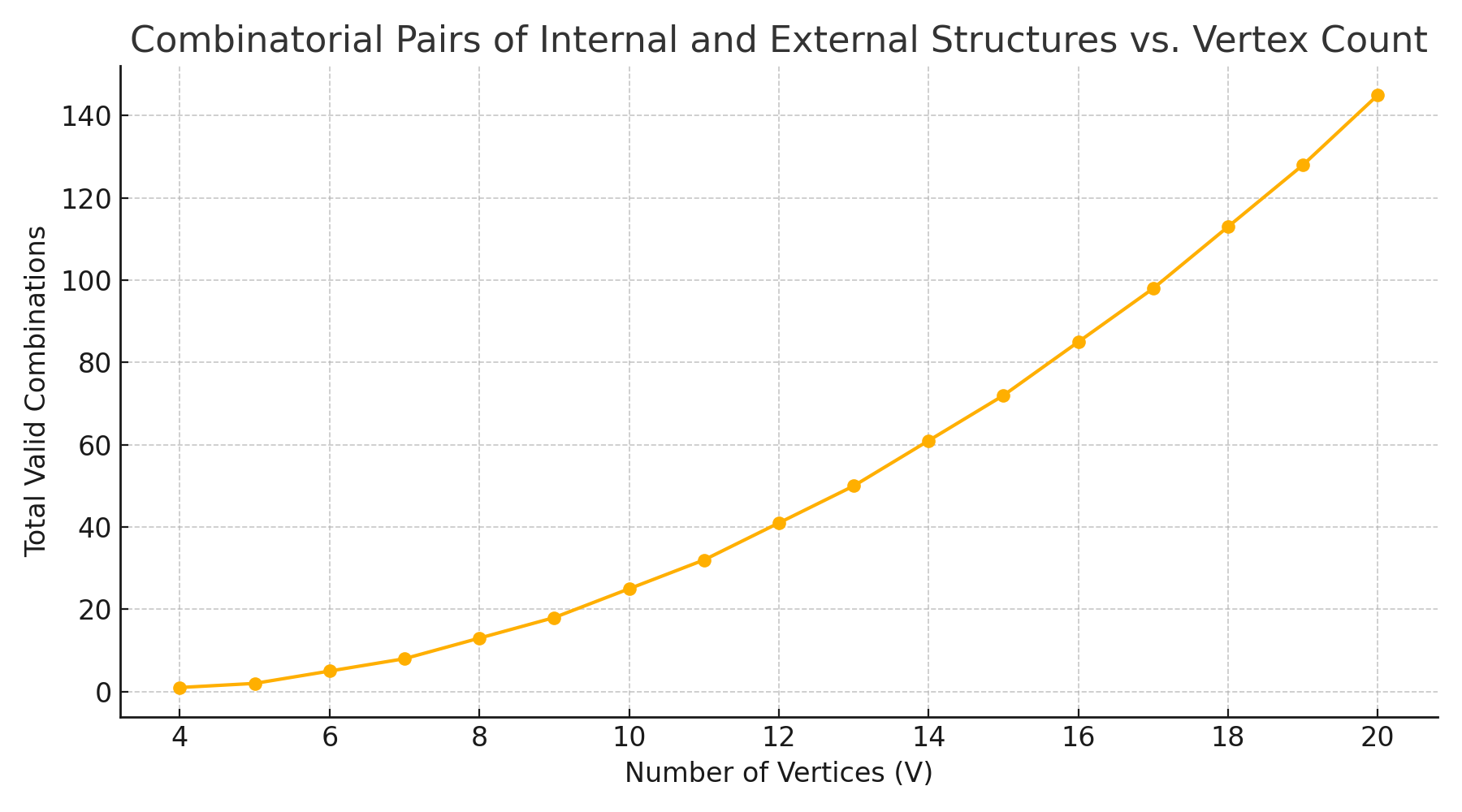}
  \caption{Combinatorial sum of internal and external structure groupings as a function of vertex count.}
\end{figure}

\section{Combinatorial Identities and Bounds}
\label{supp:identities}

This appendix records (i) \emph{proved} external and topological identities used in the main text, and (ii) the \emph{empirical/heuristic} identities and bounds collected from worked decompositions. All variables are symbolic and coordinate-free. Genus~$0$ and normal-form assumptions apply where stated. All linear ``max'' values quoted below are valid within the SALT+MIE class. Non--SALT triangulations can exceed these bounds.

\subsection{Proved External Identities and Bounds}

Exact external formulas for genus-0 polyhedra in normal form:
\begin{align}
  E &= 3V - 6 - S \tag{E.1}\label{appE:E1}\\
  F &= 2V - 4 - S \tag{E.2}\label{appE:E2}
\end{align}
where 
\[
S \;=\; \sum_{f} (\tmop{deg} f - 3) \quad\text{(flatness, total deviation from triangulation).}
\]

From the vertex degree constraint $\tmop{deg}(v) \ge 3$, one obtains the sharp bounds:

\medskip
\noindent
Even $V$:
\begin{align}
  0 &\le S \le \frac{3V}{2}-6, \tag{E.3a}\label{appE:E3a}\\
  \frac{3V}{2} &\le E \le 3V-6, \tag{E.4aE}\\
  \frac{V}{2}+2 &\le F \le 2V-4. \tag{E.4aF}
\end{align}

\noindent
Odd $V$:
\begin{align}
0 &\le S \le \tfrac{3V}{2} - \tfrac{13}{2} \tag{E.3b}\label{appE:E3b}\\
\tfrac{3V+1}{2} &\le E \le 3V - 6 \tag{E.4bE}\label{appE:E4bE}\\
\tfrac{V}{2} + \tfrac{5}{2} &\le F \le 2V - 4 \tag{E.4bF}\label{appE:E4bF}
\end{align}

\medskip
\noindent\textbf{Extremal Values Summary (genus-0, normal form)}  
\begin{center}
\begin{tabular}{c|c|c|c|c|c}
$V$ parity & $S_{\mathrm{ub}}$ & $E_{\mathrm{lb}}$ & $E_{\mathrm{ub}}$ & $F_{\mathrm{lb}}$ & $F_{\mathrm{ub}}$ \\
\hline
Even $V$ & $\dfrac{3V}{2} - 6$ & $\dfrac{3V}{2}$ & $3V - 6$ & $\dfrac{V}{2} + 2$ & $2V - 4$ \\
Odd $V$ & $\dfrac{3V}{2} - \tfrac{13}{2}$ & $\dfrac{3V+1}{2}$ & $3V - 6$ & $\dfrac{V}{2} + \tfrac{5}{2}$ & $2V - 4$ \\
\end{tabular}
\end{center}

\medskip
Angle-unit bookkeeping (fully triangulated boundary):
\begin{align}
  F_{\triangle} &= F + S \;=\; 2V - 4 \tag{E.5}\label{appE:E5}\\
  F_{\triangle} &= 4T - 2N_i \tag{E.6}\label{appE:E6}
\end{align}
(Equation~\eqref{appE:E6} counts triangle units: each tetrahedron contributes $4$, each internal gluing triangle cancels $2$.)

\subsection{Proved Topological Identity for 3D Complexes}

For any tetrahedralization of a $3$--ball (genus--0 volume) with $V_i$ interior vertices and $E_i$ interior edges,
\begin{align}
  T - N_i + E_i - V_i &= 1 \tag{E.7}\label{appE:E7}
\end{align}
(Prop.~\ref{prop:exact-euler}). In particular, when $V_i=0$ (normal form, no interior vertices),
\begin{align}
  T - N_i + E_i &= 1. \tag{E.8}\label{appE:E8}
\end{align}

\subsection{Bridge to the Heuristic Extended Euler Form}

A heuristic identity used in the text is
\begin{align}
  V - E + F &= 2\,(T - N_i + S_i) \tag{E.9h}\label{appE:E9h}
\end{align}
and the SALT ladder case asserts
\begin{align}
  T - N_i + S_i &= 1. \tag{E.10h}\label{appE:E10h}
\end{align}

\noindent\textbf{Convention (SALT+MIE).}
In the SALT+MIE setting we identify $S_i$ with the interior--edge count $E_i$;
outside SALT+MIE, $S_i$ denotes the number of interior triangulation segments
used in the chosen refinement and need not coincide with $E_i$.

\medskip
These coincide with \eqref{appE:E7}--\eqref{appE:E8} whenever $V_i=0$ and
$S_i = E_i$ (as in SALT+MIE constructions).

\bigskip

\subsection{Core Equations (proved unless marked heuristic)}

\begin{align}
  \text{(E.9h)} \quad & V - E + F = 2 (T - N_i + S_i) \quad\text{(heuristic)}\\
  \text{(E.10h)} \quad & T - N_i + S_i = 1 \quad\text{(SALT ladder case)}\\
  \text{(E.11r)} \quad & N_i = 2T - V + 2 \quad\text{(exact in normal form; no interior vertices)}\\
  \text{(E.12r)} \quad & E - F = 2T - N_i \quad\text{(from \eqref{appE:E1}--\eqref{appE:E2} + Euler)}
\end{align}

\subsection{Restricted SALT+MIE ranges (heuristic / construction--based)}

\begin{align}
  \text{(E.13r)} \quad & T_{\min} = V - 3 \\
  \text{(E.14r)} \quad & T_{\max} = 2 (V - 4) \\
  \text{(E.15r)} \quad & N_{i,\min} = V - 4 \\
  \text{(E.16r)} \quad & N_{i,\max} = 3 (V - 4) - 2 
\end{align}

\subsection{Derived Relationships (algebraic within SALT+MIE)}

The following are algebraic consequences of (E.13r)--(E.16r); they are not independent results.

\begin{align}
  \text{(E.17r)} \quad & T_{\max} = 2 N_{i,\min} \\
  \text{(E.18r)} \quad & N_{i,\max} = \tfrac{3}{2} T_{\max} - 2 \\
  \text{(E.19r)} \quad & N_{i,\max} = 2 N_{i,\min} + (V - 6) \\ 
  \text{(E.20r)} \quad & N_{i,\max} - T_{\max} = N_{i,\min} - 2 \\
  \text{(E.21r)} \quad & N_{i,\max} - T_{\max} = T_{\min} - 3 \\
  \text{(E.22r)} \quad & N_{i,\min} = T_{\min} - 1 \\
  \text{(E.23r)} \quad & 2 N_{i,\max} - 2 T_{\max} = N_{i,\min} + T_{\min} - 5 
\end{align}

\bigskip
\noindent\textit{Scope note.} 
Equations (E.1)--(E.8) are exact and universal topological identities. 
Equations (E.9h)--(E.23r) hold only heuristically or within the restricted SALT+MIE construction class.